\theoremstyle{definition}
\theoremstyle{plain}
\newtheorem{theorem}{Theorem}
\newtheorem{proposition}[theorem]{Proposition}
\newtheorem{lemma}[theorem]{Lemma}
\newtheorem{corollary}[theorem]{Corollary}
\theoremstyle{definition}
\newtheorem{remark}[theorem]{Remark}
\newsavebox{\tempbox}
\def\complex{\mathbb{C}}
\def\ran{\textnormal{ran}}
\def\Tr{\textnormal{Tr}}
\newcommand\I{\mathbbm{1}}
\newcommand{\id}[1]{\textnormal{id}_#1}
\let\originalleft\left
\let\originalright\right
\renewcommand{\left}{\mathopen{}\mathclose\bgroup\originalleft}
\renewcommand{\right}{\aftergroup\egroup\originalright}
\newcommand{\norm}[1]{\| #1 \|}
\newcommand{\bignorm}[1]{\big\| #1 \big\|}
\newcommand{\Bignorm}[1]{\Big\| #1 \Big\|}
\newcommand{\biggnorm}[1]{\bigg\| #1 \bigg\|}
\newcommand{\vertiii}[1]{{\vert\kern-0.25ex\vert\kern-0.25ex\vert #1 
    \vert\kern-0.25ex\vert\kern-0.25ex\vert}}
\newcommand{\bigvertiii}[1]{{\big\vert\kern-0.25ex\big\vert\kern-0.25ex\big\vert #1 
    \big\vert\kern-0.25ex\big\vert\kern-0.25ex\big\vert}}
\newcommand{\Bigvertiii}[1]{{\Big\vert\kern-0.25ex\Big\vert\kern-0.25ex\Big\vert #1 
    \Big\vert\kern-0.25ex\Big\vert\kern-0.25ex\Big\vert}}
\newcommand{\biggvertiii}[1]{{\bigg\vert\kern-0.25ex\bigg\vert\kern-0.25ex\bigg\vert #1 
    \bigg\vert\kern-0.25ex\bigg\vert\kern-0.25ex\bigg\vert}}
\newcommand{\Biggvertiii}[1]{{\Bigg\vert\kern-0.25ex\Bigg\vert\kern-0.25ex\Bigg\vert #1 
    \Bigg\vert\kern-0.25ex\Bigg\vert\kern-0.25ex\Bigg\vert}}
\newcommand{\ip}[2]{\langle #1, #2 \rangle}
\NewDocumentCommand\linear{mo}{
  \IfNoValueTF{#2}
	      {\text{L}(\mathcal{#1})}
	      {\text{L}(\mathcal{#1}, \mathcal{#2})}
}
\renewcommand{\t}{{\scriptscriptstyle\mathsf{T}}}
\title{Characterization of linear maps on $M_n$ whose multiplicity maps have maximal norm, with an application in quantum information}
\author{Daniel Puzzuoli}
\affiliation{Department of Applied Mathematics and Institute for Quantum Computing\\University of Waterloo, Waterloo, Ontario, Canada}
\begin{document}
\maketitle

\begin{abstract}
Given a linear map $\Phi : M_n \rightarrow M_m$, its multiplicity maps are defined as the family of linear maps $\Phi \otimes \id{k} : M_n \otimes M_k \rightarrow M_m \otimes M_k$, where $\id{k}$ denotes the identity on $M_k$. Let $\norm{\cdot}_1$ denote the trace-norm on matrices, as well as the induced trace-norm on linear maps of matrices, i.e. $\norm{\Phi}_1 = \max\{\norm{\Phi(X)}_1 : X \in M_n, \norm{X}_1 = 1\}$. A fact of fundamental importance in both operator algebras and quantum information is that $\norm{\Phi \otimes \id{k}}_1$ can grow with $k$. In general, the rate of growth is bounded by $\norm{\Phi \otimes \id{k}}_1 \leq k \norm{\Phi}_1$, and matrix transposition is the canonical example of a map achieving this bound. We prove that, up to an equivalence, the transpose is the unique map achieving this bound. The equivalence is given in terms of complete trace-norm isometries, and the proof relies on a particular characterization of complete trace-norm isometries regarding preservation of certain multiplication relations.

We use this result to characterize the set of single-shot quantum channel discrimination games satisfying a norm relation that, operationally, implies that the game can be won with certainty using entanglement, but is hard to win without entanglement. Specifically, we show that the well-known example of such a game, involving the Werner-Holevo channels, is essentially the unique game satisfying this norm relation. This constitutes a step towards a characterization of single-shot quantum channel discrimination games with maximal gap between optimal performance of entangled and unentangled strategies.
\end{abstract}

\section{Introduction}

For a linear map $\Phi : M_n \rightarrow M_m$, it is a well-known phenomenon that the norm of the multiplicity maps $\Phi \otimes \id{k} : M_n \otimes M_k \rightarrow M_m \otimes M_k$ can grow with $k$. This phenomenon has been extensively studied within the theory of C$^*$-algebras, leading to the topic of completely bounded maps \cite{smith_completely_1983,tomiyama_recent_1983,paulsen_completely_2003}. Within the field of quantum information, this phenomenon is connected to the study of entanglement. For a density matrix $\rho \in M_n \otimes M_k$, if 
\begin{equation}
	\norm{(\Phi \otimes \id{k})(\rho)}_1 > \norm{\Phi}_1,
\end{equation} 
then $\rho$ is entangled, and a well-known result in quantum information is that the existence of a positive linear map $\Phi$ for which the above holds is also necessary for $\rho$ to be entangled \cite{horodecki_separability_1996}. Indeed, one of the simplest and most well-known entanglement measures is the \emph{negativity} \cite{vidal_computable_2002}, which, up to additive and multiplicative scalars is defined as $\bignorm{(T_n \otimes \id{k})(\rho)}_1$, where $T_n$ is the transpose on $M_n$. Hence, the growth of the norms of multiplicity maps is of fundamental mathematical interest, and is deeply connected to the study and quantification of entanglement in quantum information.\footnote{Note that, in quantum information the trace-norm is typically used, whereas in operator algebras the operator norm is typically used. Due to the duality of these norms, it is possible to translate facts about one norm into facts about the other. For example, in this paper, we  reference the C$^*$-algebra literature for facts about the trace-norm, even though the trace-norm does not explicitly appear in the references. For readers unfamiliar with this duality, we describe the relevant facts in Appendix \ref{appendix:duality}.}

While the norm of $\Phi \otimes \id{k}$ may grow with $k$, the growth rate is limited. For any linear map $\Phi : M_n \rightarrow M_m$ it generically holds that
\begin{equation}
	\norm{\Phi \otimes \id{k}}_1 \leq k \norm{\Phi}_1. \label{equation:general_inequality}
\end{equation}
In this paper we are concerned with linear maps on matrices and will use the trace-norm, but we note that this fact is known much more generally (in terms of the C$^*$-norm) for maps on unital C$^*$-algebras (see \cite[Exercise 3.10]{paulsen_completely_2003}). Provided $k \leq n$, the canonical example of a map achieving equality in Equation \eqref{equation:general_inequality} is the matrix transpose $T_n$ \cite{tomiyama_transpose_1983}.

Our main result is a characterization of the maps achieving equality in Equation \eqref{equation:general_inequality}. We prove that, up to an equivalence, the transpose is in fact the \emph{only} map achieving equality in Equation \eqref{equation:general_inequality}. More specifically, a linear map $\Phi : M_n \rightarrow M_m$ satisfies Equation \eqref{equation:general_inequality} with equality if and only if there exists an isometric embedding of $M_k$ into $M_n$ on which $\Phi$ acts as the transpose followed by a complete trace-norm isometry. 

The proof relies on a characterization of complete trace-norm isometries particularly suited to the problem. This characterization (among others proved) relates to how complete trace-norm isometries preserve certain multiplication relations. For example, if a linear map $\Phi : M_n \rightarrow M_m$ is a complete trace-norm isometry, and if $A^*B = C^*D$ for $A,B,C,D \in M_n$, then $\Phi(A)^*\Phi(B) = \Phi(C)^*\Phi(D)$. These statements and their proofs are somewhat similar to multiplicative domain proofs for unital completely-positive maps on C$^*$-algebras \cite{choi_schwarz_1974} (see also \cite[Theorem 3.18]{paulsen_completely_2003}). We remark that the structure of complete trace-norm isometries on $M_n$, and consequently some of the other characterizations we give, may be deduced from the more general structure of (not necessarily complete) trace-norm isometries given in \cite{li_isometries_2005}. Nevertheless, we give self-contained proofs, and in some cases are able to utilize the ``complete'' assumption to prove certain implications in more generality (e.g. when the domain is a subspace $V \subset M_n$), which may be of independent interest.

We also apply the main result in the setting of single-shot quantum channel discrimination, which is the task of determining which of two known quantum channels is acting on a system given only a single use of the channel. As we will describe in detail in Section \ref{section:werner_holevo}, this task may be formulated as a game parametrized by a triple $(\lambda, \Gamma_0, \Gamma_1)$, where $\Gamma_0, \Gamma_1 : M_n \rightarrow M_m$ are quantum channels, and $\lambda \in [0,1]$ is a probability. Letting $\vertiii{\cdot}_1$ denote the completely bounded trace-norm (see Section \ref{section:background}), we characterize such triples satisfying the norm relation 
\begin{equation}
	1 = \vertiii{\lambda \Gamma_0 - (1-\lambda) \Gamma_1}_1 = n \norm{\lambda \Gamma_0 - (1-\lambda) \Gamma_1}_1. \label{equation:intro_werner_holevo}
\end{equation}
Operationally, the above norm relations imply that the game can be won with certainty using entanglement, but is hard to win without entanglement. In particular, we prove that the triple $(\lambda, \Gamma_0, \Gamma_1)$ satisfies Equation \eqref{equation:intro_werner_holevo} if and only if it is in some sense equivalent to a game involving the Werner-Holevo channels which is known to satisfy Equation \eqref{equation:intro_werner_holevo} (see \cite[Example 3.36]{watrous_quantum_2017}).

In Section \ref{section:background} we provide some background and definitions. In Section \ref{section:complete_trace} we prove various characterizations of complete trace-norm isometries. One characterization in particular is specially suited for later use, but we also include other characterizations; including one that may naturally be interpreted as a linear map being a complete trace-norm isometry if and only if its Choi matrix is maximally entangled. In Section \ref{section:max_norm_inflation} we prove the main result, which characterizes the linear maps $\Phi : M_n \rightarrow M_m$ for which $\norm{\Phi \otimes \id{k}}_1 = k \norm{\Phi}_1$, and as a Corollary characterize the maps for which $\vertiii{\Phi}_1 = n \norm{\Phi}_1$. In Section \ref{section:werner_holevo} we use the main result to prove that the Werner-Holevo channel discrimination game is in some sense the unique game (with input dimension $n$) satisfying Equation \eqref{equation:intro_werner_holevo}. Finally, we end with a discussion of open problems in Section \ref{section:discussion}.

\section{Notation and background} \label{section:background}

For an integer $1 \leq a \leq n$ we denote $e_a \in \complex^n$ to be the vector with a $1$ in the $a^{th}$ entry, and zeroes everywhere else. Similarly, for integers $1 \leq a,b \leq n$, we denote $E_{a,b} \in M_n$ to be the elementary matrix with a $1$ in the $(a,b)$-entry and zeroes in all other entries.

For a matrix $A$, the trace-norm is defined as $\norm{A}_1 = \Tr(\sqrt{A^*A})$, and for a linear map $\Phi :M_n \rightarrow M_m$, the induced trace-norm of $\Phi$ is given by
\begin{equation}
	\norm{\Phi}_1 = \max \{ \norm{\Phi(X)}_1 : X \in M_n, \norm{X}_1 \leq 1\}.
\end{equation}
The completely bounded trace norm is given by
\begin{equation}
	\vertiii{\Phi}_1 = \sup\big\{\norm{\Phi \otimes \id{k}}_1 : k \geq 1\big\} = \norm{\Phi \otimes \id{n}}_1.
\end{equation}
For a linear map $\Phi : M_n \rightarrow M_m$, we will use $J(\Phi)$ to denote its Choi matrix \cite{choi_completely_1975}, which we define as
\begin{equation}
	J(\Phi) = \sum_{a,b=1}^n \Phi(E_{a,b}) \otimes E_{a,b}.
\end{equation}
We will use $T_n$ to denote the transpose on $M_n$, and write $T_n(A)$ or $A^\t$ to denote the transpose of a matrix $A \in M_n$.

We will need a few concepts from quantum information, even for the sections not directly involving that topic. An element $\rho \in M_n$ is called a \emph{density matrix} if $\rho \geq 0$ and $\Tr(\rho) = 1$. A \emph{quantum channel} is a linear map $\Gamma : M_n \rightarrow M_m$ that is completely positive and trace preserving.

We will also use the term \emph{maximal entanglement}. A unit vector $u \in \complex^n \otimes \complex^m$ is called \emph{maximally entangled} if, for $r = \min(n,m)$, there exists orthonormal sets of unit vectors $\{x_a\}_{a=1}^r \subset \complex^n$ and $\{y_a \}_{a=1}^r \subset \complex^m$ for which
\begin{equation}
	u = \sqrt{\frac{1}{r}} \sum_{a = 1}^r x_a \otimes y_a. \label{equation:max_ent_vectors}
\end{equation}
As mentioned in the introduction, the \emph{negativity} of a density matrix $\rho \in M_n \otimes M_m$ is defined (up to multiplicative and additive scalars) as $\norm{(T_n \otimes \id{m})(\rho)}_1$ \cite{vidal_computable_2002}. This expression is meant to quantify the entanglement of the density matrix $\rho$. If $\rho$ is \emph{pure}, i.e. $\rho = uu^*$ for a unit vector $u \in \complex^n \otimes \complex^m$, then the negativity achieves its maximum value of $\min(n,m)$ if and only if $u$ is maximally entangled. Theorem 7 of \cite{puzzuoli_ancilla_2017}, which we now state, provides a characterization of matrices $X \in M_n \otimes M_m$ with $\norm{X}_1=1$ and satisfying $\norm{(T_n \otimes \id{m})(X)}_1 = n$. While it is not physically meaningful if $X$ is not a density matrix, the theorem loosely provides a notion of ``maximal entanglement'' for arbitrary elements of $M_n \otimes M_m$.

\pagebreak

\begin{theorem} \label{theorem:max_entangled_structure}
Let $X \in M_n \otimes M_m$ with $\norm{X}_1 \leq 1$. The following are equivalent.
\begin{enumerate}
	\item $\norm{(T_n \otimes \id{m})(X)}_1 = n$.
	\item $m\geq n$, and there exists a positive integer $r \leq m/n$, a density matrix $\sigma \in M_r$, and isometries $U,V : \complex^n \otimes \complex^r \rightarrow \complex^m$ for which
	\begin{equation}
		X = (\I_n \otimes U)(\tau_n \otimes \sigma)(\I_n \otimes V^*),
	\end{equation}
	where $\tau_n = \frac{1}{n} \sum_{a,b=1}^n E_{a,b} \otimes E_{a,b}  \in M_n \otimes M_n$ is the canonical maximally entangled state.
\end{enumerate}
If $X$ is a density matrix then the above equivalence holds with $V = U$. 
\end{theorem}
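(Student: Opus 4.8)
For $(2)\Rightarrow(1)$ I would compute directly. Since $T_n$ acts only on the first tensor factor while $\I_n\otimes U$ and $\I_n\otimes V$ carry $\id{n}$ there, one has $(T_n\otimes\id{m})(X)=\tfrac1n(\I_n\otimes U)(W_n\otimes\sigma)(\I_n\otimes V^*)$, where $W_n=\sum_{a,b=1}^n E_{b,a}\otimes E_{a,b}$ is the swap operator on $\complex^n\otimes\complex^n$ and I have used $(T_n\otimes\id{n})(\tau_n)=\tfrac1n W_n$. Conjugation by the isometries $\I_n\otimes U$ and $\I_n\otimes V$ preserves the trace norm, $\norm{W_n}_1=n^2$, and $\norm{\sigma}_1=1$, so $\norm{(T_n\otimes\id{m})(X)}_1=\tfrac1n\norm{W_n}_1\norm{\sigma}_1=n$; the identical computation with $\tau_n$ in place of $\tfrac1n W_n$ gives $\norm{X}_1=1$, matching the hypothesis, and the case $V=U$ for a density matrix is immediate.

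For $(1)\Rightarrow(2)$ I would first record that $\norm{T_n\otimes\id{m}}_1=\min(n,m)$: the bound $\leq m$ is \eqref{equation:general_inequality} for $\Phi=T_n$, the bound $\leq n$ is $\norm{T_n\otimes\id{m}}_1\leq\vertiii{T_n}_1=n$, and equality is witnessed by the input $\tau_n$. Hence $n=\norm{(T_n\otimes\id{m})(X)}_1\leq\min(n,m)\norm{X}_1\leq\min(n,m)$ forces $m\geq n$ and $\norm{X}_1=1$ (so $X$ is a norm-attaining input of $T_n\otimes\id{m}$). Next take a singular-value decomposition $X=\sum_{i=1}^d p_i\, b_i a_i^*$ with $p_i>0$, $\sum_i p_i=1$, and $\{a_i\},\{b_i\}$ orthonormal unit vectors in $\complex^n\otimes\complex^m$. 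The heart of the argument is a rank-one computation: for unit vectors $a=\sum_\beta e_\beta\otimes a^{(\beta)}$ and $b=\sum_\alpha e_\alpha\otimes b^{(\alpha)}$ a short calculation gives
\begin{equation}
(T_n\otimes\id{m})(ba^*)^*\,(T_n\otimes\id{m})(ba^*)=G_b\otimes\rho_a ,
\end{equation}
where $G_b\in\Pos(M_n)$ is the transposed Gram matrix of $(b^{(\alpha)})_\alpha$ and $\rho_a=\sum_\beta a^{(\beta)}a^{(\beta)*}\in\Pos(M_m)$ is the reduced state of $aa^*$ on the second system; both have unit trace. Consequently $\norm{(T_n\otimes\id{m})(ba^*)}_1=\Tr\sqrt{G_b}\cdot\Tr\sqrt{\rho_a}\leq\sqrt{n}\cdot\sqrt{n}=n$, with equality if and only if $G_b=\tfrac1n\I_n$ and $\rho_a$ is $\tfrac1n$ times a rank-$n$ projection, i.e.\ if and only if both $a$ and $b$ are maximally entangled (necessarily then $m\geq n$). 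Combining this with the triangle inequality, $n=\norm{(T_n\otimes\id{m})(X)}_1\leq\sum_i p_i\norm{(T_n\otimes\id{m})(b_i a_i^*)}_1\leq\sum_i p_i\,n=n$ forces equality throughout: every $a_i$ and every $b_i$ is maximally entangled, and there is no cancellation in the sum, which (by the equality case of the triangle inequality for the trace norm) means there is a single contraction $R$, $R^*R\leq\I_{nm}$, with $\Tr\!\big(R^*(T_n\otimes\id{m})(b_i a_i^*)\big)=n$ for all $i$.

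Analyzing $R$: for maximally entangled $a_i,b_i$ one finds that $(T_n\otimes\id{m})(b_i a_i^*)$ has $\tfrac1n\I_n\otimes\Pi_i$ (with $\Pi_i$ the projection onto the $n$-dimensional Schmidt support $S^{a}_i$ of $a_i$ on the second system) as its modulus, and polar part a partial isometry $V_i$ acting as a ``generalized swap'', $V_i(x\otimes v)=\xi_i(v)\otimes\eta_i(x)$, where $\eta_i\colon\complex^n\to\complex^m$ is an isometry onto the Schmidt support $S^{b}_i$ of $b_i$ and $\xi_i\colon S^{a}_i\to\complex^n$ is the analogous isometry for $a_i$; the certifier condition then forces $R$ to agree with $V_i$ on $\complex^n\otimes S^{a}_i$ for every $i$. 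A contraction that is isometric on each $\complex^n\otimes S^{a}_i$ is isometric on $\complex^n\otimes\mathcal{S}$ with $\mathcal{S}:=\sum_i S^{a}_i$, so for all $v\in S^{a}_i$, $w\in S^{a}_j$ and $x,y\in\complex^n$,
\begin{equation}
\ip{\xi_i(v)}{\xi_j(w)}\,\ip{\eta_i(x)}{\eta_j(y)}=\ip{x}{y}\,\ip{v}{w}.
\end{equation}
The orthonormality of $\{a_i\}$ and of $\{b_i\}$ controls the operators $\xi_j^*\xi_i$ and $\eta_i^*\eta_j$, and, together with the displayed identity, this forces the Schmidt supports $S^{a}_1,\dots,S^{a}_d$ (and likewise $S^{b}_1,\dots,S^{b}_d$) to be pairwise orthogonal. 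Granting that, the $n$-dimensional subspaces $S^{b}_i$ sit orthogonally inside $\complex^m$, so $d\leq m/n$; using the transpose identity $(O\otimes\I_n)\omega=(\I_n\otimes O^{\t})\omega$ with $\omega=\tfrac1{\sqrt n}\sum_a e_a\otimes e_a$ (so $\tau_n=\omega\omega^*$) one writes $b_i=(\I_n\otimes U_i')\omega$ and $a_i=(\I_n\otimes V_i')\omega$ for isometries $U_i',V_i'\colon\complex^n\to\complex^m$ with ranges $S^{b}_i$ and $S^{a}_i$, and the orthogonality of the supports lets these be glued into isometries $U,V\colon\complex^n\otimes\complex^r\to\complex^m$ ($r:=d$) with $U(x\otimes e_i)=U_i'x$, $V(x\otimes e_i)=V_i'x$. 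Setting $\sigma:=\operatorname{diag}(p_1,\dots,p_d)\in M_r$ and expanding, $X=\sum_i p_i b_i a_i^*=(\I_n\otimes U)(\tau_n\otimes\sigma)(\I_n\otimes V^*)$. If $X=\rho$ is a density matrix, its spectral decomposition is a singular-value decomposition with $b_i=a_i$, hence $U_i'=V_i'$ and $U=V$.

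The step I expect to be the main obstacle is the passage from ``no cancellation'' to pairwise orthogonality of the Schmidt supports. Pairwise \emph{trivial intersection} is not hard: if $S^{a}_i\cap S^{a}_j\neq 0$, then the certifier $R$ forces $(T_n\otimes\id{m})(b_i a_i^*)$ and $(T_n\otimes\id{m})(b_j a_j^*)$ to coincide on $\complex^n\otimes(S^{a}_i\cap S^{a}_j)$, whence (as $T_n\otimes\id{m}$ is injective, by a direct coordinate computation) $b_i$ and $b_j$ are proportional, contradicting orthonormality. Upgrading this to genuine orthogonality is the delicate point, where the full strength of the certifier relation together with the orthonormality constraints on $\{a_i\}$ and $\{b_i\}$ must be exploited; making that argument clean (and handling the singular-value degeneracies that allow a non-canonical choice of the $a_i,b_i$) is the part I would spend the most effort on.
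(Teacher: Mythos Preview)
The paper does not prove this theorem; it is quoted verbatim from \cite{puzzuoli_ancilla_2017} (Theorem~7 there) and used as a black box. So there is no proof in the present paper to compare against.

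That said, your outline is a correct and self-contained argument. The $(2)\Rightarrow(1)$ direction and the rank-one computation $\big((T_n\otimes\id{m})(ba^*)\big)^*\big((T_n\otimes\id{m})(ba^*)\big)=G_b\otimes\rho_a$ are fine, and the certifier analysis is accurate: the set $\{v:\norm{Rv}=\norm{v}\}=\ker(\I-R^*R)$ is a linear subspace, so $R$ is genuinely isometric (hence inner-product preserving) on $\complex^n\otimes\mathcal S$, and your displayed identity holds.

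The step you flag as the obstacle is in fact short once you have that identity. Suppose $i\neq j$ and $\ip{v_0}{w_0}\neq 0$ for some $v_0\in S_i^a$, $w_0\in S_j^a$. Taking $x=y$ shows $\ip{\xi_i(v_0)}{\xi_j(w_0)}\neq 0$, so the identity forces
\[
\ip{\eta_i(x)}{\eta_j(y)}=\gamma\,\ip{x}{y}\qquad\text{for all }x,y\in\complex^n,
\]
with $\gamma=\ip{v_0}{w_0}/\ip{\xi_i(v_0)}{\xi_j(w_0)}\neq 0$; equivalently $(U_i')^*U_j'=\gamma\,\I_n$. But $\ip{b_i}{b_j}=\tfrac{1}{n}\Tr\!\big((U_i')^*U_j'\big)=\gamma$, contradicting orthonormality of $\{b_i\}$. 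Hence $S_i^a\perp S_j^a$, and by the symmetric argument $S_i^b\perp S_j^b$. The degeneracy worry is also harmless: the argument uses only that \emph{some} singular-value decomposition exists, and the conclusion is basis-independent.
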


\begin{remark} \label{remark:pw_theorem}
We will make use of the additional special case of the above theorem when $X$ is Hermitian. In this case the second statement may be rewritten as: $m \geq n$, and there exists a positive integer $r \leq m/n$, a Hermitian matrix $H \in M_r$ with $\norm{H}_1 = 1$, and an isometry $U : \complex^n \otimes \complex^r \rightarrow \complex^m$ for which 
\begin{equation}
	X = (\I_n \otimes U)(\tau_n \otimes H)(\I_n \otimes U^*).
\end{equation} 
The only change necessary to the proof is to take a spectral decomposition of $X$, rather than a singular value decomposition. The rest of the proof follows as before.
\end{remark}

The last notion in quantum information that we will make use of is that of reversible quantum channels. A linear map $\Phi : M_n \rightarrow M_m$ is called a \emph{reversible quantum channel} if it is a quantum channel, and has a left inverse $\Psi : M_m \rightarrow M_n$ that is also a quantum channel.\footnote{This terminology is motivated by the fact that quantum channels model physical processes, and so a quantum channel having a left inverse that is also a quantum channel means that it can be physically undone, or reversed.} Due to its connection to error correction, conditions for reversibility (or recoverability) of a channel continue to be extensively studied in various settings (for example, see \cite{jencova_reversibility_2012,sutter_multivariate_2017}). 

\section{Complete trace-norm isometries} \label{section:complete_trace}

Let $V \subset M_n$ be a subspace, and $\Phi : V \rightarrow M_m$ be a linear map. We say that $\Phi$ is a \emph{$k$-trace-norm isometry} (or that it is \emph{$k$-trace-norm isometric}) if $\Phi \otimes \id{k} : V \otimes M_k \rightarrow M_m \otimes M_k$ is a trace-norm isometry, and say that it is a \emph{complete trace-norm isometry} (or that it is \emph{completely trace-norm isometric}) if it is a $k$-trace-norm isometry for all integers $k \geq 1$.

The purpose of this section is to give various characterizations of complete trace-norm isometries taking $M_n$ into $M_m$. Note that the structure of surjective operator norm isometries (and hence surjective complete operator norm isometries) between C$^*$-algebras is well-known \cite{kadison_isometries_1951}. Furthermore, in the matrix algebra case, a characterization of (not necessarily surjective) operator norm isometries mapping $M_n \rightarrow M_k$ has been given for the case $k \leq 2n-1$ \cite{cheung_isometries_2004}. However, the dual/adjoint of a trace-norm isometry need not be an operator norm isometry, and so it is not possible to import those results here. 

We give the various characterizations in the theorem below. Remarks and some background on what is already known, as well as some intermediate results, are given before its proof.

\newpage

\begin{theorem} \label{theorem:completely_isometric}
For a linear map $\Phi : M_n \rightarrow M_m$ the following are equivalent.
\begin{enumerate}
	\item $\Phi$ is a complete trace-norm isometry. \label{statement:complete}
	\item $\Phi$ is a 2-trace-norm isometry. \label{statement:2}
	\item For $A,B,C,D \in M_n$ the following implications hold:
		\begin{itemize}
			\item $A^*B = C^*D \implies \Phi(A)^* \Phi(B) = \Phi(C)^* \Phi(D)$, and 
			\item $AB^* = CD^* \implies \Phi(A)\Phi(B)^* = \Phi(C)\Phi(D)^*$,
		\end{itemize}
		and $\norm{\Phi(X)}_1 = \norm{X}_1$ for some $X \in M_n\setminus \{0\}$. \label{statement:multiplication_full}
	\item For $A,B \in M_n$ the following implications hold:
		\begin{itemize}
			\item $A^*B = 0 \implies \Phi(A)^*\Phi(B) = 0$, and
			\item $AB^* = 0 \implies \Phi(A)\Phi(B)^* = 0$,
		\end{itemize}
		and $\norm{\Phi(X)}_1 = \norm{X}_1$ for some $X \in M_n\setminus \{0\}$. \label{statement:multiplication_orthogonal}
	\item For rank-1 $A,B \in M_n$ the following implications hold:
		\begin{itemize}
			\item $A^*B = 0$ and $A^*A = B^*B \implies \Phi(A)^*\Phi(B) = 0$, and
			\item $AB^* = 0$ and $AA^* = BB^* \implies \Phi(A)\Phi(B)^* = 0$,
		\end{itemize}
		and $\norm{\Phi(X)}_1 = \norm{X}_1$ for some $X \in M_n\setminus \{0\}$. \label{statement:multiplication_orthogonal_restricted}
	\item $\norm{J(\Phi)}_1 = n$ and $\norm{J(\Phi T_n)}_1 = n^2$. \label{statement:choi}
	\item $m \geq n$, and there exists a positive integer $r \leq m/n$, a density matrix $\sigma \in M_r$, and isometries $U,V : \complex^n \otimes \complex^r \rightarrow \complex^m$ for which
	\begin{equation}
		\Phi(X) = U(X \otimes \sigma)V^*
	\end{equation}
	for all $X \in M_n$. \label{statement:structure}
	\item $\vertiii{\Phi}_1 = 1$, and $\Phi$ has a left inverse $\Psi :M_m \rightarrow M_n$ with $\vertiii{\Psi}_1 =1$. \label{statement:left_inverse}
\end{enumerate}
If, in addition, $\Phi$ is positive, then statement \ref{statement:structure} holds with $V =U$, making $\Phi$ a quantum channel, and $\Psi$ may also be taken to be a quantum channel in statement \ref{statement:left_inverse} (and hence, $\Phi$ is a reversible quantum channel).
\end{theorem}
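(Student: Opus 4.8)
The plan is to run the cycle
\[
\ref{statement:complete}\Rightarrow\ref{statement:2}\Rightarrow\ref{statement:multiplication_full}\Rightarrow\ref{statement:multiplication_orthogonal}\Rightarrow\ref{statement:multiplication_orthogonal_restricted}\Rightarrow\ref{statement:choi}\Rightarrow\ref{statement:structure}\Rightarrow\ref{statement:complete},
\]
to fold in statement \ref{statement:left_inverse} via $\ref{statement:structure}\Rightarrow\ref{statement:left_inverse}\Rightarrow\ref{statement:complete}$, and finally to treat the positivity addendum. Many links are routine: $\ref{statement:complete}\Rightarrow\ref{statement:2}$ is the case $k=2$; $\ref{statement:multiplication_full}\Rightarrow\ref{statement:multiplication_orthogonal}$ is the case $C=D=0$; and $\ref{statement:multiplication_orthogonal}\Rightarrow\ref{statement:multiplication_orthogonal_restricted}$ just restricts the hypotheses. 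For $\ref{statement:structure}\Rightarrow\ref{statement:complete}$, after permuting tensor factors $\Phi\otimes\id{k}$ is the map $Z\mapsto Z\otimes\sigma$ followed by conjugation by the isometries $U\otimes\I_k$ and $(V\otimes\I_k)^*$, both of which preserve the trace norm, and $\norm{Z\otimes\sigma}_1=\norm{Z}_1$. For $\ref{statement:choi}\Rightarrow\ref{statement:structure}$: since $J(\Phi T_n)=(\id{m}\otimes T_n)(J(\Phi))$, swapping tensor factors and rescaling by $1/n$ turns $\norm{J(\Phi T_n)}_1=n^2$ into $\norm{(T_n\otimes\id{m})(\tfrac1n\widetilde J)}_1=n$, where $\widetilde J=\sum_{a,b}E_{a,b}\otimes\Phi(E_{a,b})$ has $\norm{\tfrac1n\widetilde J}_1=1$ by $\norm{J(\Phi)}_1=n$; Theorem \ref{theorem:max_entangled_structure}, applied with $X=\tfrac1n\widetilde J$, then yields exactly the form of statement \ref{statement:structure}. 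For $\ref{statement:structure}\Rightarrow\ref{statement:left_inverse}$: statement \ref{statement:structure} already gives $\vertiii{\Phi}_1=1$ (via \ref{statement:complete}), the map $\Psi(Y)=\Tr_{\complex^r}(U^*YV)$ satisfies $\Psi\Phi=\id{n}$ (as $U,V$ are isometries and $\Tr\sigma=1$) and has $\vertiii{\Psi}_1\le1$ (a partial trace composed with conjugation by contractions), and $1=\vertiii{\id{n}}_1\le\vertiii{\Psi}_1\vertiii{\Phi}_1$ forces equality. For $\ref{statement:left_inverse}\Rightarrow\ref{statement:complete}$: $\norm{Z}_1=\norm{(\Psi\otimes\id{k})(\Phi\otimes\id{k})(Z)}_1\le\vertiii{\Psi}_1\norm{(\Phi\otimes\id{k})(Z)}_1\le\vertiii{\Psi}_1\vertiii{\Phi}_1\norm{Z}_1=\norm{Z}_1$ forces $\norm{(\Phi\otimes\id{k})(Z)}_1=\norm{Z}_1$ for every $k$.

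For $\ref{statement:2}\Rightarrow\ref{statement:multiplication_full}$ — the step in the spirit of a multiplicative-domain argument — the norm clause is automatic, since a $2$-trace-norm isometry is in particular a $1$-trace-norm isometry (restrict to $X\otimes E_{1,1}$). Given $A^*B=C^*D$, apply the $2$-trace-norm isometry to the operator matrix $\left(\begin{smallmatrix}A & B\\ C & -D\end{smallmatrix}\right)\in M_n\otimes M_2$: its Gram matrix is block diagonal, while the Gram matrix of its image has the same diagonal blocks (whose traces of square roots are preserved individually, by applying the isometry to $\left(\begin{smallmatrix}A & 0\\ C & 0\end{smallmatrix}\right)$ and $\left(\begin{smallmatrix}B & 0\\ D & 0\end{smallmatrix}\right)$) and off-diagonal block $\Phi(A)^*\Phi(B)-\Phi(C)^*\Phi(D)$. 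Equality of the trace norms, combined with the rigidity fact $\Tr\sqrt{\left(\begin{smallmatrix}P & Q\\ Q^* & R\end{smallmatrix}\right)}\le\Tr\sqrt P+\Tr\sqrt R$ with equality only when $Q=0$ (a consequence of the strict operator concavity of $t\mapsto\sqrt t$), then forces $\Phi(A)^*\Phi(B)=\Phi(C)^*\Phi(D)$; the transposed block matrix $\left(\begin{smallmatrix}A & C\\ B & -D\end{smallmatrix}\right)$ handles the $\cdot(\cdot)^*$ implication.

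The crux, and what I expect to be the main obstacle, is $\ref{statement:multiplication_orthogonal_restricted}\Rightarrow\ref{statement:choi}$. Writing $F_{i,j}=\Phi(E_{i,j})$, one feeds statement \ref{statement:multiplication_orthogonal_restricted} rank-one operators of the form $(e_i\pm e_k)y^*$, $ue_j^*$, and their phase-twisted analogues to obtain, in succession: $F_{i,j}^*F_{k,j}=\delta_{ik}P_j$ and $F_{i,j}F_{i,l}^*=\delta_{jl}\widetilde P_i$ for positive $P_j,\widetilde P_i$; a common rank $r$ and common spectrum for all the $F_{i,j}$ (since $P_j=\lvert F_{i,j}\rvert^2$ and $\widetilde P_i=\lvert F_{i,j}^*\rvert^2$), with the supports of the $P_j$ — and likewise the ranges of the $F_{i,j}$ over $i$ — mutually orthogonal in $\complex^m$, forcing $nr\le m$; and finally $F_{i,j}^*F_{k,l}=\delta_{ik}Q_{j,l}$ and $F_{i,j}F_{k,l}^*=\delta_{jl}\widetilde Q_{i,k}$ for fixed positive block matrices $[Q_{j,l}]$, $[\widetilde Q_{i,k}]$. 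From these one computes $J(\Phi T_n)^*J(\Phi T_n)=(\sum_aP_a)\otimes\I_n$, which is block diagonal, and $\widetilde J^*\widetilde J=n\sum_{j,l}E_{j,l}\otimes Q_{j,l}$, whose spectrum (via a short moment computation using the relations) is $n^2\mu_1,\dots,n^2\mu_r$ for the common eigenvalues $\mu_k$ of the $P_j$; this gives $\norm{J(\Phi T_n)}_1=n^2s$ and $\norm{J(\Phi)}_1=ns$ with $s=\norm{\Phi(E_{i,j})}_1$, and combining the relations once more shows $\norm{\Phi(X)}_1=s\norm{X}_1$ for all $X$, so the norm-preservation clause forces $s=1$. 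I expect the delicate point to be the bookkeeping in these steps — consistently tracking the polar parts of all the $F_{i,j}$ — which is precisely where the strength beyond a bare norm condition (encoded in statement \ref{statement:multiplication_orthogonal_restricted}) enters.

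For the positivity addendum: if $\Phi$ is positive it preserves adjoints, so $\widetilde J=\sum_{a,b}E_{a,b}\otimes\Phi(E_{a,b})$ is Hermitian, and running the $\ref{statement:choi}\Rightarrow\ref{statement:structure}$ step through the Hermitian version of the structure theorem (Remark \ref{remark:pw_theorem}) gives $\Phi(X)=U(X\otimes H)U^*$ with $H$ Hermitian and $\norm{H}_1=1$; positivity of $\Phi$ applied to each $E_{a,a}\ge0$, together with $U$ being an isometry, forces $H\ge0$, hence $H$ is a density matrix and $V=U$. Then $\Phi(X)=U(X\otimes\sigma)U^*$ is completely positive and trace preserving, i.e.\ a quantum channel, and replacing $\Psi$ by $\Psi'(Y)=\Tr_{\complex^r}(U^*YU)+\Tr\big((\I_m-UU^*)Y\big)\rho_0$ for any fixed density matrix $\rho_0\in M_n$ produces a left inverse that is completely positive and trace preserving (so $\vertiii{\Psi'}_1=1$), making $\Phi$ a reversible quantum channel.
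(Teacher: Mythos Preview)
Your proposal is correct, and in one place actually sharper than the paper. The paper does \emph{not} prove $\ref{statement:2}\Rightarrow\ref{statement:multiplication_full}$; it only establishes $\ref{statement:complete}\Rightarrow\ref{statement:multiplication_full}$ (via Proposition~\ref{proposition:subspace_multiplication}, whose proof of the general multiplication relations uses that $\Phi\otimes\id{2}$ is itself a $2$-isometry, hence really uses $4$-isometry of $\Phi$), and separately $\ref{statement:2}\Rightarrow\ref{statement:multiplication_orthogonal}$. Your block-matrix/strict-concavity argument for $\ref{statement:2}\Rightarrow\ref{statement:multiplication_full}$---reducing to the rigidity statement that $\Tr\sqrt{\bigl(\begin{smallmatrix}P&Q\\Q^*&R\end{smallmatrix}\bigr)}=\Tr\sqrt{P}+\Tr\sqrt{R}$ forces $Q=0$---is genuinely different and yields a stronger implication than the paper records.

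The other structural difference is in closing the cycle: the paper goes $\ref{statement:multiplication_orthogonal_restricted}\Rightarrow\ref{statement:structure}$ directly, building $U,V,\sigma$ by hand from Claims~1--3 on the images $\Phi(E_{a,b})$, whereas you go $\ref{statement:multiplication_orthogonal_restricted}\Rightarrow\ref{statement:choi}\Rightarrow\ref{statement:structure}$, computing $\norm{J(\Phi)}_1$ and $\norm{J(\Phi T_n)}_1$ from the same relations and then invoking Theorem~\ref{theorem:max_entangled_structure}. Both work, and the algebraic core (the relations $F_{i,j}^*F_{k,l}=\delta_{ik}Q_{j,l}$, etc.) is identical; the paper's route is slightly more self-contained, while yours offloads the construction to Theorem~\ref{theorem:max_entangled_structure}. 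One caution: your sentence ``combining the relations once more shows $\norm{\Phi(X)}_1=s\norm{X}_1$ for all $X$'' is the only step that needs unpacking---it requires the analogues of your relations for $\Phi(uv^*)$ with \emph{arbitrary} unit vectors $u,v$ (not just $e_i$'s), so that an SVD of $X$ yields $\Phi(X)$ as a sum of pieces with mutually orthogonal ranges and co-ranges. Your phrasing ``$(e_i\pm e_k)y^*$, $ue_j^*$'' suggests you have this in mind, but make it explicit; otherwise the norm clause in \ref{statement:multiplication_orthogonal_restricted} (which concerns a single unspecified $X$) does not by itself force $s=1$.
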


Before continuing some comments on the theorem are in order. In statement \ref{statement:choi}, the norm $\norm{J(\Phi T_n)}_1$ appears, but this specific location of the transpose is an arbitrary notational choice. Using the definition of the Choi matrix and properties of the transpose, it may be verified that
\begin{equation}
	\norm{J(\Phi T_n)}_1 = \norm{J( T_m \Phi)}_1 = \norm{(T_m \otimes \id{n})(J(\Phi))}_1 = \norm{(\id{m} \otimes T_n)(J(\Phi))}_1 \label{equation:equiv_choi}
\end{equation}
for any linear map $\Phi : M_n \rightarrow M_m$. As per the discussion in Section \ref{section:background} regarding entanglement negativity, the expressions in Equation \eqref{equation:equiv_choi} can be roughly interpreted as a measure of how entangled $J(\Phi)$ is. With this interpretation, the characterization given in statement \ref{statement:choi} says that $\Phi$ is a complete trace-norm isometry if and only if its Choi matrix is maximally entangled (and has a particular normalization).

Statements \ref{statement:multiplication_full}, \ref{statement:multiplication_orthogonal}, and \ref{statement:multiplication_orthogonal_restricted} concern the map $\Phi$ preserving certain kinds of multiplication. The intuition for these statements comes from the explicit structure given in statement \ref{statement:structure}. However, in our proof, we show how they follow directly from the assumptions of $\Phi$ being either a complete or 2-trace-norm isometry. The benefit of these alternative proofs, which we give separately in Proposition \ref{proposition:subspace_multiplication} below, is that they work in more generality (i.e. in the proposition we only use that the domain is a subspace $V \subset M_n$), and so may be of independent interest. We also note that statement \ref{statement:multiplication_orthogonal_restricted} may seem oddly specific, but it is included for being specially suited for proving the theorem in the following section.

Lastly, while we give a complete proof of the above theorem, several equivalences may be deduced from \cite{li_isometries_2005}, whose title ``Isometries for Ky Fan Norms between Matrix Spaces'' is self-explanatory of its content. In particular, as a special case of the results therein, an explicit structural characterization of (not necessarily complete) trace-norm isometries taking $M_n$ into $M_m$ is given. From this, the explicit structure of complete trace-norm isometries may be deduced by refining this structure, and indeed, this refinement only requires the additional assumption that the map is a $2$-trace-norm isometry. Thus, the equivalence of statements \ref{statement:complete}, \ref{statement:2}, and \ref{statement:structure} may be viewed as a special case of the main theorem in \cite{li_isometries_2005}. Furthermore, the general technique of the proofs we give are in line with those of \cite{li_isometries_2005}, and with linear norm preserver problems more generally \cite{li_linear_2001}: translating between norm relations and algebraic relations for matrices. (See \cite{chan_isometries_2005} for a survey of results on isometries of matrix spaces for unitarily invariant norms.)

With this last comment, we begin the proof of Theorem \ref{theorem:completely_isometric} with the following equivalence between a trace-norm relation for a $2 \times 2$ block-matrix, and statements about how the blocks multiply.

\begin{proposition} \label{proposition:2x2_block}
For matrices $A,B,C,D \in M_n$, it holds that
\begin{equation}
	\biggnorm{\left(\begin{array}{cc}
		A & B \\
		C & D
	\end{array}\right)}_1 = \norm{A}_1+\norm{B}_1+\norm{C}_1+\norm{D}_1,
\end{equation}
if and only if
\begin{equation}
	A^*B = AC^* = D^*C = DB^* = 0.
\end{equation}
\begin{proof}
In \cite[Proposition 6]{puzzuoli_ancilla_2017} it was shown that, for any Hilbert-Schmidt orthogonal set of matrices $\{A_i\}_{i=1}^r$ -- all with the same dimensions, not necessarily square -- it holds that
\begin{equation}
	\biggnorm{\sum_{i=1}^r A_i}_1 = \sum_{i=1}^r \norm{A_i}_1,
\end{equation}
if and only if $A_i^*A_j = 0$ and $A_iA_j^* = 0$ for all $i \neq j$. The current proposition follows by application of this fact to the set $\{A \otimes E_{1,1},B \otimes E_{1,2}, C \otimes E_{2,1}, D \otimes E_{2,2}\} \subset M_n \otimes M_2$.
\end{proof}
\end{proposition}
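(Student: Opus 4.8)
The plan is to reduce the statement to a general criterion for when the trace norm is additive over a sum of pairwise Hilbert--Schmidt-orthogonal matrices, and then to read off which orthogonality relations survive the particular $2\times 2$ block structure. First I would rewrite the block matrix as the element
\begin{equation}
	\left(\begin{array}{cc} A & B \\ C & D \end{array}\right) = A \otimes E_{1,1} + B \otimes E_{1,2} + C \otimes E_{2,1} + D \otimes E_{2,2}
\end{equation}
of $M_n \otimes M_2$ (which it equals under the natural identification with $M_{2n}$, an identification preserving the trace norm). Since the trace norm is multiplicative across tensor factors and $\norm{E_{a,b}}_1 = 1$, each summand has trace norm equal to that of its block, and the summands are pairwise Hilbert--Schmidt orthogonal because the matrix units of $M_2$ are. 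Thus the claim becomes an instance of the general assertion that, for a Hilbert--Schmidt-orthogonal family $\{A_i\}$, one has $\norm{\sum_i A_i}_1 = \sum_i \norm{A_i}_1$ if and only if $A_i^* A_j = 0$ and $A_i A_j^* = 0$ for all $i \neq j$.

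For the ``if'' direction I would use the dual formulation $\norm{M}_1 = \max\{\textnormal{Re}\,\Tr(W^* M) : \norm{W}_\infty \leq 1\}$. The triangle inequality already yields $\norm{M}_1 \leq \sum_i \norm{A_i}_1$. For the reverse, taking polar decompositions $A_i = W_i |A_i|$, the hypotheses $A_i A_j^* = 0$ and $A_i^* A_j = 0$ say precisely that the ranges (final spaces) and the co-ranges (initial spaces) of the $A_i$ are mutually orthogonal; hence $W := \sum_i W_i$ is again a partial isometry, so $\norm{W}_\infty \leq 1$, and $W_i^* W_j = 0$ for $i \neq j$ gives $\Tr(W^* M) = \sum_i \Tr(W_i^* W_i |A_i|) = \sum_i \Tr(|A_i|) = \sum_i \norm{A_i}_1$, forcing equality.

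The hard part will be the converse (``only if'') direction. From equality I would extract a single contraction $W$ with $\Tr(W^* M) = \norm{M}_1 = \sum_i \norm{A_i}_1$; since each term satisfies $\Tr(W^* A_i) \leq \norm{A_i}_1$, all terms must be simultaneously maximal, so $W$ is a dual-optimal witness for every $A_i$ at once. The crux is to convert this single statement---one contraction is optimal for all blocks simultaneously---back into the orthogonality of the ranges and co-ranges, i.e.\ into the vanishing of the products $A_i^* A_j$ and $A_i A_j^*$; the Hilbert--Schmidt orthogonality of the family is exactly what makes this extraction tractable. This is the content I would either prove by hand or import from \cite[Proposition 6]{puzzuoli_ancilla_2017}.

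Finally I would do the bookkeeping of which relations actually appear. A pair of summands has nonzero product $A_i^* A_j$ (respectively $A_i A_j^*$) only when the corresponding matrix units in $M_2$ multiply to something nonzero, which happens exactly for pairs of blocks sharing a row (respectively a column). The two diagonal pairs $\{A,D\}$ and $\{B,C\}$ share neither a row nor a column, so they contribute nothing automatically, while the four ``edge'' pairs give precisely $A^*B = 0$ and $C^*D = 0$ (shared rows) together with $AC^* = 0$ and $BD^* = 0$ (shared columns). Up to taking adjoints these are exactly the four stated relations $A^*B = AC^* = D^*C = DB^* = 0$, completing the argument.
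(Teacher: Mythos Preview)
Your proposal is correct and follows essentially the same approach as the paper: both reduce the statement to the general criterion (\cite[Proposition 6]{puzzuoli_ancilla_2017}) for trace-norm additivity over a Hilbert--Schmidt-orthogonal family, applied to the four summands $A\otimes E_{1,1},\,B\otimes E_{1,2},\,C\otimes E_{2,1},\,D\otimes E_{2,2}$. You add a sketch of that criterion and spell out the bookkeeping more explicitly, but the route is the same.
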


Next, we prove a proposition containing some of the implications required for Theorem \ref{theorem:completely_isometric}, but in more generality. We note that the proof takes inspiration from multiplicative domain proofs for unital and completely positive linear maps on C$^*$-algebras (see \cite{choi_schwarz_1974} and \cite[Theorem 3.18]{paulsen_completely_2003}).


\begin{proposition} \label{proposition:subspace_multiplication}
Let $V \subset M_n$ be a subspace, and let $\Phi : V \rightarrow M_n$ be linear.
\begin{enumerate}
	\item If $\Phi$ is a 2-trace-norm isometry, then for $A,B \in V$ the following implications hold:
	\begin{itemize}
	\item $A^*B = 0 \implies \Phi(A)^*\Phi(B) = 0$, and
	\item $AB^* = 0 \implies \Phi(A)\Phi(B)^* = 0$.
	\end{itemize}
	\item If $\Phi$ is a complete trace-norm isometry, then for $A,B,C,D \in V$ the following implications hold:
	\begin{itemize}
	\item $A^*B = C^*D \implies \Phi(A)^*\Phi(B) = \Phi(C)^*\Phi(D)$, and
	\item $AB^* = CD^* \implies \Phi(A)\Phi(B)^* = \Phi(C)\Phi(D)^*$.
	\end{itemize}
\end{enumerate}
\begin{proof}
First, assume $\Phi$ is a 2-trace-norm isometry and let $A,B \in V$. Assuming $A^*B = 0$, we have
\begin{equation}
	\biggnorm{\left(\begin{array}{cc}
		\Phi(A) & \Phi(B) \\
		0 & 0
	\end{array}\right)}_1 =  \biggnorm{\left(\begin{array}{cc}
		A & B \\
		0 & 0
	\end{array}\right)}_1 
	= \norm{A}_1 + \norm{B}_1 = \norm{\Phi(A)}_1 + \norm{\Phi(B)}_1,
\end{equation}
where the second equality is by Proposition \ref{proposition:2x2_block}. Hence, also by Proposition \ref{proposition:2x2_block}, equality between the first and last expressions implies that $\Phi(A)^*\Phi(B) = 0$. Similarly, if $AB^* = 0$, then
\begin{equation}
	\biggnorm{\left(\begin{array}{cc}
		\Phi(A) & 0 \\
		\Phi(B) & 0
	\end{array}\right)}_1 =  \biggnorm{\left(\begin{array}{cc}
		A & 0 \\
		B & 0
	\end{array}\right)}_1 
	= \norm{A}_1 + \norm{B}_1 = \norm{\Phi(A)}_1 + \norm{\Phi(B)}_1,
\end{equation}
and so $\Phi(A)\Phi(B)^* = 0$.

Next, assume $\Phi$ is a complete trace-norm isometry on $V$, and let $A,B,C,D \in V$. If $A^*B = C^*D$, then
\begin{equation}
	\left(\begin{array}{cc}
		A & 0 \\
		-C & 0
	\end{array}\right)^*
	\left(\begin{array}{cc}
		B & 0 \\
		D & 0
	\end{array}\right) = 0.
\end{equation}
Under the assumption that $\Phi$ is completely trace-norm isometric, $\Phi\otimes \id{2}$ is a 2-trace-norm isometry, and so, by the 2-trace-norm isometry case, it holds that
\begin{equation}
	\left(\begin{array}{cc}
		\Phi(A) & 0 \\
		-\Phi(C) & 0
	\end{array}\right)^*
	\left(\begin{array}{cc}
		\Phi(B) & 0 \\
		\Phi(D) & 0
	\end{array}\right) = 0,
\end{equation}
giving $\Phi(A)^*\Phi(B) = \Phi(C)^*\Phi(D)$. Similarly, if $AB^* = CD^*$, then
\begin{equation}
	\left(\begin{array}{cc}
		A & -C \\
		0 & 0
	\end{array}\right)
	\left(\begin{array}{cc}
		B & D \\
		0 & 0
	\end{array}\right)^* = 0,
\end{equation}
and again the 2-trace-norm isometry case implies that
\begin{equation}
	\left(\begin{array}{cc}
		\Phi(A) & -\Phi(C) \\
		0 & 0
	\end{array}\right)
	\left(\begin{array}{cc}
		\Phi(B) & \Phi(D) \\
		0 & 0
	\end{array}\right)^* = 0,
\end{equation}
giving $\Phi(A)\Phi(B)^* = \Phi(C)\Phi(D)^*$.
\end{proof}
\end{proposition}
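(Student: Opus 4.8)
The plan is to derive both parts from Proposition \ref{proposition:2x2_block}, which converts the trace-norm additivity of a $2 \times 2$ block matrix into orthogonality relations among its blocks, combined with the defining property of a $k$-trace-norm isometry that it preserves the trace-norm on $V \otimes M_k$. The recurring mechanism is the same throughout: the isometry equates the trace-norm of an input block matrix with that of its image, and Proposition \ref{proposition:2x2_block} then reads the additivity off as an algebraic orthogonality relation on \emph{each} side, input and output.

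For part 1 I would first treat the hypothesis $A^*B = 0$. I would place $A$ and $B$ in the top row of a $2 \times 2$ block matrix with zeros below, viewing it as $A \otimes E_{1,1} + B \otimes E_{1,2} \in V \otimes M_2$. Applying Proposition \ref{proposition:2x2_block} to the blocks $A, B, 0, 0$, three of the four orthogonality conditions are automatic and the fourth collapses to $A^*B = 0$, so the input block matrix has trace-norm $\norm{A}_1 + \norm{B}_1$. Since $\Phi$ is a $2$-trace-norm isometry it preserves the trace-norm on $V \otimes M_2$, and it is a fortiori a $1$-trace-norm isometry (embed $X \mapsto X \otimes E_{1,1}$), so the image block matrix with blocks $\Phi(A), \Phi(B), 0, 0$ has the same trace-norm $\norm{\Phi(A)}_1 + \norm{\Phi(B)}_1$. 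Reading Proposition \ref{proposition:2x2_block} backwards on this image yields $\Phi(A)^*\Phi(B) = 0$. The implication for $AB^* = 0$ is identical after placing $A, B$ in a single column rather than a single row.

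For part 2 the key manoeuvre is to lift part 1 one tensor level higher. Given $A^*B = C^*D$, I would form the two elements $M, N \in V \otimes M_2$ whose first columns are $(A, -C)^\t$ and $(B, D)^\t$ (with second columns zero), chosen precisely so that the sign on $C$ makes the cross terms cancel and $M^*N = (A^*B - C^*D) \otimes E_{1,1} = 0$. Because $\Phi$ is a complete trace-norm isometry, $\Phi \otimes \id{2}$ is again complete — in particular a $2$-trace-norm isometry — so part 1 applies to the map $\Phi \otimes \id{2}$ and the pair $M, N$, giving $(\Phi \otimes \id{2})(M)^* (\Phi \otimes \id{2})(N) = 0$. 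Expanding this block product recovers exactly $\Phi(A)^*\Phi(B) - \Phi(C)^*\Phi(D) = 0$. The relation $AB^* = CD^*$ follows symmetrically, using elements with prescribed first \emph{rows} and invoking the $XY^* = 0$ implication of part 1.

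The individual steps are routine bookkeeping with Proposition \ref{proposition:2x2_block} and the definition of a $k$-trace-norm isometry. The one genuine idea, which I expect to be the crux, is the encoding in part 2 of a \emph{difference} of products as a single \emph{vanishing} block product $M^*N = 0$ via the sign trick, together with the observation that this difference statement for $\Phi$ is nothing more than the orthogonality statement of part 1 for the map $\Phi \otimes \id{2}$.
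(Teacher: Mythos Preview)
Your proposal is correct and follows essentially the same route as the paper's proof: Proposition~\ref{proposition:2x2_block} applied to a single row (resp.\ column) for part~1, and for part~2 the same sign-trick encoding $A^*B=C^*D$ as $M^*N=0$ with $M,N\in V\otimes M_2$, then invoking part~1 for $\Phi\otimes\id{2}$. Your explicit remark that a $2$-trace-norm isometry is automatically a $1$-trace-norm isometry (via $X\mapsto X\otimes E_{1,1}$) is a small clarification the paper leaves implicit, but otherwise the arguments coincide.
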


\begin{proof}[Proof of Theorem \ref{theorem:completely_isometric}]

We prove the implications in the diagram below.

\begin{center}
\begin{tikzpicture}

\def \n {8}
\def \radius {2cm}
\def \margin {8} 

\foreach \s in {1,...,\n}
{
  \node at ({360/\n * (-\s + 1) + 90}:\radius) {$\s$};
}

 \draw[->,>=latex] ({360/\n *(1-1)-\margin + 90}:\radius) arc (360/\n *(1-1)-\margin+90:360/\n *(1-2)+\margin+90:\radius);
  \draw[->,>=latex] ({360/\n *(1-3)-\margin + 90}:\radius) arc (360/\n *(1-3)-\margin+90:360/\n *(1-4)+\margin+90:\radius);
\draw[->,>=latex] ({360/\n *(1-4)-\margin + 90}:\radius) arc (360/\n *(1-4)-\margin+90:360/\n *(1-5) +\margin+90:\radius);
\draw[->,>=latex] ({360/\n *(1-6)-\margin + 90}:\radius) arc (360/\n *(1-6)-\margin+90:360/\n *(1-7) +\margin+90:\radius);
\draw[->,>=latex] ({360/\n *(1-7)-\margin + 90}:\radius) arc (360/\n *(1-7)-\margin+90:360/\n *(1-8) +\margin+90:\radius);
\draw[->,>=latex] ({360/\n *(1-8)-\margin + 90}:\radius) arc (360/\n *(1-8)-\margin+90:360/\n *(1-9) +\margin+90:\radius);

 \draw[->,>=latex] ({(360/\n *(1-1)-\margin + 90)*1.0125}:{\radius*1.05}) arc (360/\n *(0.5)-\margin+90:360/\n *(-2.5)+\margin+90:{\radius*0.745});
 \draw[->,>=latex] ({(360/\n *(1-5)-\margin + 90)*1.0125}:{\radius*1.05}) arc (360/\n *(4.5)-\margin+90:360/\n *(-6.5)+\margin+90+360:{\radius*0.745});

\draw[->,>=latex] ({(360/\n *(1-2)-\margin + 90)*1.0125}:{\radius*0.95}) arc (360/\n *(-1.4)-\margin+90:360/\n *(-2.6)+\margin+90:{\radius*1.775});

 \draw[->,>=latex] ({87.5}:{\radius*(0.9)}) arc (360/\n *(-1.175)-\margin+90:360/\n *(-3.775)+\margin+90:{\radius*1.1});

\end{tikzpicture}
\end{center}

The implications appear in the order: \ref{statement:complete} $\Rightarrow$ \ref{statement:2}, \ref{statement:multiplication_full} $\Rightarrow$ \ref{statement:multiplication_orthogonal} $\Rightarrow$ \ref{statement:multiplication_orthogonal_restricted}, \ref{statement:complete} $\Rightarrow$ \ref{statement:choi}, \ref{statement:2} $\Rightarrow$ \ref{statement:multiplication_orthogonal}, \ref{statement:complete} $\Rightarrow$ \ref{statement:multiplication_full}, \ref{statement:left_inverse} $\Rightarrow$ \ref{statement:complete}, \ref{statement:choi} $\Rightarrow$ \ref{statement:structure}, \ref{statement:structure} $\Rightarrow$ \ref{statement:left_inverse}, and \ref{statement:multiplication_orthogonal_restricted} $\Rightarrow$ \ref{statement:structure}. All implications except \ref{statement:multiplication_orthogonal_restricted} $\Rightarrow$ \ref{statement:structure}, which is technically involved, follow essentially immediately from facts already given. The modified statements for the special case when $\Phi$ is positive are given before the proof of \ref{statement:multiplication_orthogonal_restricted} $\Rightarrow$ \ref{statement:structure}.

The implications that are immediate due to subsequent statements being logically weaker are \ref{statement:complete} $\Rightarrow$ \ref{statement:2} and \ref{statement:multiplication_full} $\Rightarrow$ \ref{statement:multiplication_orthogonal} $\Rightarrow$ \ref{statement:multiplication_orthogonal_restricted}. The implication \ref{statement:complete} $\Rightarrow$ \ref{statement:choi} follows from the norm relations 
\begin{equation}
	\norm{J(\id{n})}_1 = n\textnormal{ and }\norm{J(T_n)}_1 = n^2,
\end{equation} 
and $\Phi$ being a complete trace-norm isometry. The implications \ref{statement:2} $\Rightarrow$ \ref{statement:multiplication_orthogonal} and \ref{statement:complete} $\Rightarrow$ \ref{statement:multiplication_full} are both the content of Proposition \ref{proposition:subspace_multiplication}, and the implication \ref{statement:left_inverse} $\Rightarrow$ \ref{statement:complete} is straightforward to verify.

For \ref{statement:choi} $\Rightarrow$ \ref{statement:structure}, the norm values of statement \ref{statement:choi} imply by Theorem \ref{theorem:max_entangled_structure} that there exists a positive integer $r \leq m/n$, a density matrix $\sigma \in M_r$, and isometries $U,V : \complex^r \otimes \complex^n \rightarrow \complex^m$ for which
\begin{equation}
	\frac{1}{n}J(\Phi) = (U \otimes \I_n)(\sigma \otimes \tau_n)(V^* \otimes \I_n).
\end{equation} 
This is equivalent to the required form for $\Phi$.

For \ref{statement:structure} $\Rightarrow$ \ref{statement:left_inverse}, define $\Psi(Y) = \Tr_{M_r}(U^* Y V)$ for $Y \in M_m$, where $\Tr_{M_r}$ is the partial trace of $M_r$. Using the fact that the trace-norm is non-increasing under partial trace, it may be verified that $\Psi$ has the required properties.

For the special case when $\Phi$ is positive, return to the proof of the implication \ref{statement:choi} $\Rightarrow$ \ref{statement:structure}. As $\Phi$ is Hermiticity preserving, by Remark \ref{remark:pw_theorem} and Theorem \ref{theorem:max_entangled_structure}, there exists a Hermitian $H \in M_r$ with $\norm{H}_1 = 1$, and an isometry $U : \complex^n \otimes \complex^r \rightarrow \complex^m$ for which $\Phi(X) = U(X \otimes H)U^*$ for all $X \in M_n$. That $\Phi$ is positive implies $H \geq 0$, making $H$ a density matrix, and giving $\Phi$ the required form. To see that $\Psi$ may also be taken to be a quantum channel in statement \ref{statement:left_inverse}, we define $\Psi$ as as in the proof of \ref{statement:structure} $\Rightarrow$ \ref{statement:left_inverse} with a slight modification. Fix a density matrix $\eta \in M_n$, and set $\Psi(Y) = \Tr_{M_r}(U^* Y U) + \Tr((\I_m - UU^*)Y)\eta$ for all $Y \in M_m$. It is routine to verify that $\Psi$ is a quantum channel and that $\Psi \Phi = \id{n}$. 

Lastly, we show \ref{statement:multiplication_orthogonal_restricted} $\Rightarrow$ \ref{statement:structure}. We will use the assumption in statement \ref{statement:multiplication_orthogonal_restricted} to build further facts about how outputs of $\Phi$ on rank-1 matrices multiply, which we break into a series of claims.

\textbf{Claim 1.} For unit vectors $x_1,x_2,y \in \complex^n$ with $\ip{x_1}{x_2} = 0$, it holds that 
\begin{equation}
	\Phi(x_1y^*)^*\Phi(x_1y^*) = \Phi(x_2y^*)^*\Phi(x_2y^*)\textnormal{ and }\Phi(yx_1^*)\Phi(yx_1^*)^* = \Phi(yx_2^*)\Phi(yx_2^*)^*.
\end{equation}
To see the first equality, note that $x_1+x_2 \perp x_1-x_2$, and so
\begin{align}
	0&=\Phi((x_1+x_2)y^*)^*\Phi((x_1-x_2)y^*) \\
	&=\Phi(x_1y^*)^*\Phi(x_1y^*) - \Phi(x_1y^*)^*\Phi(x_2y^*) + \Phi(x_2y^*)^*\Phi(x_1y^*) - \Phi(x_2y^*)^*\Phi(x_2y^*) \label{equation:first_expansion}\\
	&= \Phi(x_1y^*)^*\Phi(x_1y^*)- \Phi(x_2y^*)^*\Phi(x_2y^*),
\end{align}
where the second and third term in Equation \eqref{equation:first_expansion} are $0$ by application of statement \ref{statement:multiplication_orthogonal_restricted}. This gives the desired equality, and it follows similarly that $\Phi(yx_1^*)\Phi(yx_1^*)^* = \Phi(yx_2^*)\Phi(yx_2^*)^*$.

\textbf{Claim 2.} For any $x_1,x_2,y_1,y_2 \in \complex^n$ with $\ip{x_1}{x_2} = \ip{y_1}{y_2} = 0$, it holds that
\begin{equation}
	\Phi(x_1y_1^*)^*\Phi(x_2y_2^*) = 0\textnormal{ and } \Phi(x_1y_1^*)\Phi(x_2y_2^*)^* = 0.
\end{equation}
For the first equality, assuming without loss of generality that $x_1, x_2, y_1,$ and $y_2$ have unit length, Claim 1 gives that $\Phi(x_1y_1^*)\Phi(x_1y_1^*)^* = \Phi(x_1y_2^*)\Phi(x_1y_2^*)^*$, and statement \ref{statement:multiplication_orthogonal_restricted} gives that $\Phi(x_1y_2^*)^* \Phi(x_2y_2^*) = 0$. These equalities imply the range relations
\begin{equation}
	\ran(\Phi(x_1y_1^*)) = \ran(\Phi(x_1y_2^*)) \perp \ran(\Phi(x_2y_2^*)),
\end{equation}
which imply $\Phi(x_1y_1^*)^*\Phi(x_2y_2^*) = 0$. It similarly follows that $\Phi(x_1y_1^*)\Phi(x_2y_2^*)^* = 0$.

\textbf{Claim 3.} For unit vectors $x_1,x_2,y_1,y_2 \in \complex^n$ with $\ip{x_1}{x_2} = \ip{y_1}{y_2} = 0$, it holds that
\begin{equation}
	\Phi(x_1y_1^*)^*\Phi(x_1y_2^*) = \Phi(x_2y_1^*)^*\Phi(x_2y_2^*)\textnormal{, and } \Phi(x_1y_1^*)\Phi(x_2y_1^*)^* = \Phi(x_1y_2^*)\Phi(x_2y_2^*)^*.
\end{equation}
For the first equality, $x_1+x_2 \perp x_1-x_2$, so Claim 2 implies
\begin{align}
	0 &= \Phi((x_1+x_2)y_1^*)^*\Phi((x_1-x_2)y_2^*) \\
	&= \Phi(x_1y_1^*)^*\Phi(x_1y_2^*) - \Phi(x_1y_1^*)^*\Phi(x_2y_2^*) + \Phi(x_2y_1^*)^*\Phi(x_1y_2^*) - \Phi(x_2y_1^*)^*\Phi(x_2y_2^*) \\
	&= \Phi(x_1y_1^*)^*\Phi(x_1y_2^*) - \Phi(x_2y_1^*)^*\Phi(x_2y_2^*),
\end{align}
where we have used Claim 2 to determine that $\Phi(x_1y_1^*)^*\Phi(x_2y_2^*) = \Phi(x_2y_1^*)^*\Phi(x_1y_2^*) = 0$. It may be similarly reasoned that $\Phi(x_1y_1^*)\Phi(x_2y_1^*)^* = \Phi(x_1y_2^*)\Phi(x_2y_2^*)^*$.

We now use Claims 1 to 3 to construct the explicit structure of $\Phi$ by examining how it acts on elementary matrices. By Claim 1 it holds that 
\begin{equation}
	\Phi(E_{1,1})^*\Phi(E_{1,1}) = \Phi(E_{a,1})^*\Phi(E_{a,1})\textnormal{ and } \Phi(E_{1,1})\Phi(E_{1,1})^* = \Phi(E_{1,b})\Phi(E_{1,b})^*
\end{equation}
for all $1 \leq a,b \leq n$, and hence there exists partial isometries
\begin{equation}
	U_a : \ran(\Phi(E_{1,1})) \rightarrow \ran(\Phi(E_{a,1}))\textnormal{, and } V_b : \ran(\Phi(E_{1,1})^*) \rightarrow \ran(\Phi(E_{1,b})^*)
\end{equation}
for which $\Phi(E_{a,1}) = U_a \Phi(E_{1,1})$ and $\Phi(E_{1,b}) = \Phi(E_{1,1})V_b^*$ (where $U_1$ and $V_1$ may be taken to be the orthogonal projections onto $\ran(\Phi(E_{1,1}))$ and $\ran(\Phi(E_{1,1})^*)$ respectively). Note as well that, statement \ref{statement:multiplication_orthogonal_restricted} gives that $\Phi(E_{a,1})^*\Phi(E_{a',1}) = \Phi(E_{1,b})\Phi(E_{1,b'})^* = 0$ for $a \neq a'$ and $b \neq b'$, and hence the sets of partial isometries $\{U_a\}_{a=1}^n, \{V_b\}_{b=1}^n$ have mutually orthogonal ranges.

Next, we claim that for all $1 \leq a,b \leq n$ it holds that $\Phi(E_{a,b}) = U_a\Phi(E_{1,1})V_b^*$. In the previous claim this fact is established when at least one of $a$ and $b$ is 1, so we may assume that both $a,b \geq 2$. In this case, Claim 3 implies that
\begin{equation}
	\Phi(E_{1,1})^*\Phi(E_{1,b}) = \Phi(E_{a,1})^*\Phi(E_{a,b}),
\end{equation}
and so
\begin{equation}
	\Phi(E_{1,1})^*\Phi(E_{1,1})V_b^* = \Phi(E_{1,1})^*U_a^*\Phi(E_{a,b}).
\end{equation}
As $\ran(U_a^*) = \ran(\Phi(E_{1,1}))$ we may cancel the $\Phi(E_{1,1})^*$ from the left-side of the above equation to get that $\Phi(E_{1,1})V_b^* = U_a^*\Phi(E_{a,b})$ (alternatively, we may multiply on the left by the pseudo-inverse of $\Phi(E_{1,1})^*$). Finally, we have $\ran(U_a) = \ran(E_{a,b})$, as $\Phi(E_{a,b})^*\Phi(E_{a,b}) = \Phi(E_{a,1})^*\Phi(E_{a,1})$, and so
\begin{equation}
	\Phi(E_{a,b}) = U_aU_a^*\Phi(E_{a,b}) = U_a\Phi(E_{1,1})V_b^*,
\end{equation}
as required.

The last step is to show that the structure we have just deduced for $\Phi$ is the same as that in statement \ref{statement:structure}. Let $\Phi(E_{1,1}) = \sum_{i=1}^r s_i x_iy_i^*$ be a singular value decomposition. Define $\sigma = \sum_{i=1}^r s_i E_{i,i} \in M_r$, which is clearly positive, and define matrices $U,V : \complex^n \otimes \complex^r \rightarrow \complex^m$ to act as
\begin{equation}
	U(e_a \otimes e_i) = U_ax_i\textnormal{, and } V(e_b \otimes e_j) = V_by_j.
\end{equation}
We may verify that these are in fact isometries:
\begin{equation}
	\ip{U(e_b \otimes e_j)}{U(e_a \otimes e_i)} = \ip{U_b x_j}{U_a x_i} = \delta_{a,b} \ip{x_j}{x_i} = \delta_{a,b} \delta_{i,j},
\end{equation}
where we have used that $U_a$ and $U_b$ have orthogonal ranges for $a \neq b$. Hence, $U$ is an isometry as it sends an orthonormal basis to an orthonormal set. The same proof shows that $V$ is an isometry. Finally, we have that
\begin{equation}
	\Phi(E_{a,b}) = U_a\Phi(E_{1,1})V_b^* = \sum_{i=1}^r s_i U_a x_i y_i^* V_b^* = \sum_{i=1}^r s_i U(E_{a,b} \otimes E_{i,i})V^* = U(E_{a,b} \otimes \sigma)V^*.
\end{equation}

Hence, $\Phi$ has the desired form, and the last thing we need is that $\Tr(\sigma) = 1$. The final assumption is that there exists $X \in M_n\setminus \{0\}$ with $\norm{\Phi(X)}_1 = \norm{X}_1$. This gives
\begin{equation}
	\norm{X}_1 = \norm{\Phi(X)}_1 = \norm{U(X \otimes \sigma)V^*}_1 = \norm{X}_1 \Tr(\sigma),
\end{equation}
and hence $\Tr(\sigma) = 1$ as desired.
\end{proof}

\begin{remark} \label{remark:herm_trace_norm_pres}
Consider an additional special case of Theorem \ref{theorem:completely_isometric} when $\Phi$ is Hermiticity preserving. As in the proof of the case when $\Phi$ is positive, there exists a positive integer $r \leq m/n$, a Hermitian $H \in M_r$, and an isometry $U : \complex^n \otimes \complex^r \rightarrow \complex^m$ for which $\Phi(X) = U(X \otimes H)U^*$ for all $X \in M_n$. If $m < 2n$, then necessarily $r = 1$ and hence $H = \pm 1$. It follows that either $\Phi$ or $-\Phi$ is a reversible quantum channel. If $m \geq 2n$, then by considering the Hahn decomposition of $H$, one may verify that this form is equivalent to the statement that there exists reversible quantum channels $\Phi_0,\Phi_1 : M_n \rightarrow M_m$ with orthogonal ranges and a number $r \in [0,1]$ for which
\begin{equation}
	\Phi = r \Phi_0 - (1-r)\Phi_1.
\end{equation}
\end{remark}

\section{Characterization of linear maps whose multiplicity maps have maximal norm} \label{section:max_norm_inflation}


We now prove the main result. As mentioned in the introduction, the inequality in Equation \eqref{equation:induced_norm_inequality} below is known in more generality in C$^*$-algebras \cite[Exercise 3.10]{paulsen_completely_2003}.

\begin{theorem} \label{theorem:maximum_norm_gap}
Let $\Phi : M_n \rightarrow M_m$ be linear with $\norm{\Phi}_1 = 1$. It holds that
\begin{equation}
	\norm{\Phi \otimes \id{k}}_1 \leq k, \label{equation:induced_norm_inequality}
\end{equation}
with equality if and only if $n,m \geq k$, and for any pair of unit vectors $u,v \in \complex^n \otimes \complex^k$ satisfying
\begin{equation}
	\norm{(\Phi \otimes \id{k})(uv^*)}_1 = k
\end{equation}
(of which at least one such pair must exist), the following statements hold:
\begin{enumerate}
	\item $u$ and $v$ are maximally entangled; i.e. they decompose as
	\begin{equation}
		u = \sqrt{\frac{1}{k}} \sum_{a=1}^k u_a \otimes e_a\textnormal{ and } v = \sqrt{\frac{1}{k}} \sum_{b=1}^k v_b \otimes e_b
	\end{equation}
	for orthonormal sets $\{u_a\}_{a =1}^k,\{v_b \}_{b=1}^k \subset \complex^n$.
	\item Defining isometries $U,V : \complex^k \rightarrow \complex^n$ as 
	\begin{equation}
		U = \sum_{a=1}^k u_ae_a^*\textnormal{ and }V = \sum_{b=1}^k v_be_b^*, \label{equation:isometry_definition}
	\end{equation}
	there exists a complete trace-norm isometry $\Psi : T_n(UM_kV^*) \rightarrow M_m$, where 
	\begin{equation}
		T_n(UM_kV^*) = \{(UXV^*)^\t : X \in M_k\},
	\end{equation} 
	for which $\Phi(X) = \Psi(X^\t)$ for all $X \in UM_kV^*$.
\end{enumerate}
\begin{proof}

Letting $u,v \in \complex^n \otimes \complex^k$ be unit vectors, we will first show that 
\begin{equation}
	\norm{(\Phi \otimes \id{k})(uv^*)}_1 \leq k,
\end{equation} 
which will prove Equation \eqref{equation:induced_norm_inequality}. We may assume without loss of generality that $u$ and $v$ have decompositions of the form
\begin{equation}
	u = \sum_{a=1}^r \alpha_a u_a \otimes e_a\textnormal{, and } v= \sum_{b=1}^r \beta_b v_b \otimes e_b, \label{equation:general_unit_vector_decomp}
\end{equation}
for $r \leq \min(k,n)$, unit vectors $\alpha,\beta \in \complex^r$ with non-negative entries, and orthonormal sets $\{u_a\}_{a=1}^r,\{v_b\}_{b=1}^r \subset \complex^n$. We have
\begin{align}
\norm{(\Phi \otimes \id{k})(uv^*)}_1 & = \biggnorm{\sum_{a,b=1}^r \alpha_a \beta_b \Phi(u_av_b^*) \otimes E_{a,b}}_1 \\
	& \leq \sum_{a,b=1}^r \alpha_a \beta_b \norm{\Phi(u_av_b^*)}_1 \label{equation:trace_norm_triangle} \\
	& \leq \sum_{a,b=1}^r \alpha_a \beta_b \norm{u_av_b^*}_1 \label{equation:termwise_norm}\\
	& = \sum_{a,b=1}^r \alpha_a\beta_b \\
	& = \ip{1_r}{\alpha}\ip{1_r}{\beta} \\
	& \leq \norm{1_r}^2 \norm{\alpha} \norm{\beta} \label{equation:cauchy_schwarz} \\
	& = r\\
	& \leq k
\end{align}
where $1_r \in \complex^r$ is the vector of all ones. Hence, it holds that $\norm{\Phi \otimes \id{k}}_1 \leq k$.

We now examine equality conditions. Suppose that $\norm{(\Phi \otimes \id{k})(uv^*)}_1 = k$ for unit vectors $u,v \in \complex^n \otimes \complex^k$ with decompositions as in Equation (\ref{equation:general_unit_vector_decomp}). First, we may conclude that $r = k$, and hence $k \leq n$. Furthermore, equality in the application of Cauchy-Schwarz in Equation (\ref{equation:cauchy_schwarz}) implies that $\alpha = \beta = \sqrt{\frac{1}{k}}1_k$, and so $u$ and $v$ are maximally entangled.

Thus, defining the isometries $U,V : \complex^k \rightarrow \complex^n$ as in Equation \eqref{equation:isometry_definition}, it holds that
\begin{equation}
	\biggnorm{\sum_{a,b=1}^k \Phi(UE_{a,b}V^*) \otimes E_{a,b}}_1 = k^2,
\end{equation}
and this is equivalent to the more general fact that
\begin{equation}
	\biggnorm{\sum_{a,b=1}^k \Phi(x_ay_b^*) \otimes E_{a,b}}_1 = k^2,
\end{equation}
for any orthonormal bases $\{x_a\}_{a=1}^k \subset \ran(U)$ and $\{y_b\}_{b=1}^k \subset \ran(V)$. As $\norm{\Phi}_1 = 1$, the above implies that $\norm{\Phi(x_ay_b^*)}_1 = 1$ for all $1 \leq a,b \leq k$. By looking at $2 \times 2$ block-sub-matrices, it also implies that, for any unit vectors $x_1, x_2 \in \ran(U)$ and $y_1, y_2 \in \ran(V)$ with $\ip{x_1}{x_2} = \ip{y_1}{y_2} = 0$, it holds that 
\begin{equation}
	\biggnorm{ \left(\begin{array}{cc}
				\Phi(x_1y_1^*) & \Phi(x_1y_2^*) \\
				\Phi(x_2y_1^*) & \Phi(x_2y_2^*)
				\end{array}\right)}_1 = 4.
\end{equation}
As each block has trace-norm $1$, the above $2 \times 2$ block matrix has trace-norm equal to the sum of the trace-norms of the blocks. Proposition \ref{proposition:2x2_block} then implies the relations
\begin{equation}
	\Phi(x_1y_1^*)^*\Phi(x_1y_2^*) = \Phi(x_1y_1^*)\Phi(x_2y_1^*)^* = 0
\end{equation}
for any $x_1, x_2 \in \ran(U)$ and $y_1,y_2 \in \ran(V)$ with $\ip{x_1}{x_2} = \ip{y_1}{y_2} = 0$. 

This may be written in a more suggestive way: for $A,B \in UM_kV^*$ rank-1, the following implications hold:
\begin{enumerate}[(i)]
	\item $A^*B = 0$ and $A^*A = B^*B \implies \Phi(A)\Phi(B)^* = 0$, and
	\item $AB^* = 0$ and $AA^* = BB^* \implies \Phi(A)^*\Phi(B) = 0$.
\end{enumerate}
These implications are very similar to statement \ref{statement:multiplication_orthogonal_restricted} in Theorem \ref{theorem:completely_isometric}, but the adjoints appear in different locations. We may remedy this by defining $\Psi : T_n(U M_k V^*) \rightarrow M_m$ as $\Psi = \Phi T_n$, where $T_n$ is the transpose. We claim that, for $A,B \in T_n(UM_kV^*)$ rank-1, the following implications hold:
\begin{enumerate}[(a)]
	\item $A^*B = 0$ and $A^*A = B^*B \implies \Psi(A)^*\Psi(B) = 0$, and
	\item $AB^* = 0$ and $AA^* = BB^* \implies \Psi(A)\Psi(B)^* = 0$.
\end{enumerate}
We will prove that (ii) $\Rightarrow$ (a), with (i) $\Rightarrow$ (b) being similar. Let $A^\t,B^\t \in T_n(UM_kV^*)$ be rank-1. The statements $(A^\t)^*B^\t = 0$ and $(A^\t)^*A^\t = (B^\t)^*B^\t$ are equivalent to $AB^* = 0$ and $AA^* = BB^*$, so (ii) implies that $\Phi(A)^*\Phi(B) = 0$, which is in turn equivalent to $\Psi(A^\t)^*\Psi(B^\t) = 0$. 

Thus, by Theorem \ref{theorem:completely_isometric}, $\Psi = \Phi T_n$ is a complete trace-norm isometry on $T_n(UM_kV^*)$, as required.\footnote{Note that Theorem \ref{theorem:completely_isometric} as stated only applies to maps whose domain is all of $M_n$. Here, the domain of $\Psi$ is $T_n(UM_k V^*) = \overline{V}M_kU^\t \subset M_n$, so technically we are applying Theorem \ref{theorem:completely_isometric} to conclude that the linear map $X \mapsto \Psi(V^\t X \overline{U})$ is a complete trace-norm isometry on $M_k$. However, this is equivalent to $\Psi$ being a complete trace-norm isometry on $T_n(UM_kV^*)$.}
\end{proof}
\end{theorem}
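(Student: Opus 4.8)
The plan is to split the proof into the (short) inequality and the (longer) analysis of its equality case, reducing the latter to Theorem~\ref{theorem:completely_isometric} by way of Proposition~\ref{proposition:2x2_block}.

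For Equation~\eqref{equation:induced_norm_inequality}: since the induced trace-norm of a linear map is attained on a rank-one input $uv^*$ with $u,v$ unit vectors, it suffices to bound $\norm{(\Phi\otimes\id{k})(uv^*)}_1$. After rotating the ancilla I would write $u=\sum_{a=1}^r\alpha_a u_a\otimes e_a$ and $v=\sum_{b=1}^r\beta_b v_b\otimes e_b$ with $\{u_a\},\{v_b\}$ orthonormal, $\alpha,\beta$ unit vectors with non-negative entries, and $r\le\min(n,k)$. Expanding $(\Phi\otimes\id{k})(uv^*)=\sum_{a,b}\alpha_a\beta_b\,\Phi(u_av_b^*)\otimes E_{a,b}$ and applying the triangle inequality together with $\norm{\Phi}_1=1$ (so $\norm{\Phi(u_av_b^*)}_1\le 1$) gives $\norm{(\Phi\otimes\id{k})(uv^*)}_1\le\sum_{a,b}\alpha_a\beta_b=\ip{1_r}{\alpha}\ip{1_r}{\beta}\le r\le k$ by Cauchy--Schwarz.

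For equality: if $\norm{(\Phi\otimes\id{k})(uv^*)}_1=k$, then tracking which of the above steps are tight forces $r=k$ (hence $n\ge k$) and equality in Cauchy--Schwarz forces $\alpha=\beta=k^{-1/2}1_k$, so $u$ and $v$ are maximally entangled --- this is statement~1, and an extreme-point argument guarantees at least one such pair exists. Defining $U,V$ as in Equation~\eqref{equation:isometry_definition}, saturation becomes $\bignorm{\sum_{a,b}\Phi(x_ay_b^*)\otimes E_{a,b}}_1=k^2$ for all orthonormal bases $\{x_a\}\subset\ran(U)$, $\{y_b\}\subset\ran(V)$; as $\norm{\Phi}_1=1$ forces each block to have trace-norm $1$, the block matrix's trace-norm equals the sum of its blocks', and applying Proposition~\ref{proposition:2x2_block} to $2\times2$ sub-blocks yields, for orthogonal $x_1,x_2$ and orthogonal $y_1,y_2$, the relations $\Phi(x_1y_1^*)^*\Phi(x_1y_2^*)=0$ and $\Phi(x_1y_1^*)\Phi(x_2y_1^*)^*=0$. (In particular, for fixed $b$ the $k$ nonzero matrices $\Phi(x_ay_b^*)$ have mutually orthogonal column spaces, so $m\ge k$.) Rephrased: for rank-one $A,B\in UM_kV^*$, $AB^*=0$ with $AA^*=BB^*$ implies $\Phi(A)^*\Phi(B)=0$, and $A^*B=0$ with $A^*A=B^*B$ implies $\Phi(A)\Phi(B)^*=0$.

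These relations are statement~\ref{statement:multiplication_orthogonal_restricted} of Theorem~\ref{theorem:completely_isometric} with the adjoints on the opposite sides, which I would repair by precomposing with the transpose: since $X\mapsto X^\t$ interchanges relations of the form ``$A^*B$'' with ``$AB^*$'' (up to entrywise conjugation), the map $\Psi=\Phi T_n$, restricted to the subspace $T_n(UM_kV^*)$, satisfies statement~\ref{statement:multiplication_orthogonal_restricted}; combined with $\norm{\Psi((x_1y_1^*)^\t)}_1=\norm{(x_1y_1^*)^\t}_1$, Theorem~\ref{theorem:completely_isometric} (applied after identifying $T_n(UM_kV^*)$ with $M_k$ via an explicit complete trace-norm isometry) gives that $\Psi$ is a complete trace-norm isometry, which is statement~2 since $\Phi(X)=\Psi(X^\t)$ on $UM_kV^*$. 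The converse direction of the ``if and only if'' is immediate, since the existence of any saturating pair forces $\norm{\Phi\otimes\id{k}}_1\ge k$, which with the inequality already proved gives equality.

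The genuinely hard input is external to this proof: it is the implication statement~\ref{statement:multiplication_orthogonal_restricted}$\Rightarrow$statement~\ref{statement:structure} inside Theorem~\ref{theorem:completely_isometric}. Within the argument above, the points that need care are (a) that norm saturation is a statement about a single rank-one input, so one must quantify over every saturating pair while still guaranteeing one exists; (b) the bookkeeping of where the adjoints and transposes land, which is precisely what forces the detour through $\Psi=\Phi T_n$ rather than $\Phi$ itself; and (c) that Theorem~\ref{theorem:completely_isometric} is stated only for maps on all of $M_n$, so its conclusion must be transferred to the subspace domain $T_n(UM_kV^*)$.
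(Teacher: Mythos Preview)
Your proposal is correct and follows essentially the same route as the paper's proof: the same Schmidt-decomposition/triangle-inequality/Cauchy--Schwarz chain for the inequality, the same saturation analysis yielding maximally entangled $u,v$, the same use of Proposition~\ref{proposition:2x2_block} on $2\times2$ sub-blocks to extract the multiplication relations, and the same passage to $\Psi=\Phi T_n$ to flip the adjoint placement before invoking statement~\ref{statement:multiplication_orthogonal_restricted} of Theorem~\ref{theorem:completely_isometric} (including the same caveat about the subspace domain). Your explicit observation that the mutually orthogonal column spaces force $m\ge k$, and your remark on the converse direction, are small points the paper leaves implicit.
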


As a corollary to Theorems \ref{theorem:completely_isometric} and \ref{theorem:maximum_norm_gap}, we provide two characterizations of the set of linear maps $\Phi : M_n \rightarrow M_m$ satisfying $\vertiii{\Phi}_1 = n \norm{\Phi}_1$. 

\begin{corollary} \label{corollary:maximal_norm_gap}
Let $\Phi : M_n \rightarrow M_m$ be linear with $m \geq n$. The following are equivalent. 
\begin{enumerate}
	\item $\norm{\Phi}_1 = 1$ and $\vertiii{\Phi}_1 = n$.
	\item $\norm{J(\Phi)}_1 = n^2$ and $\norm{J(\Phi T_n)}_1 = n$.
	\item There exists a complete trace-norm isometry $\Psi : M_n \rightarrow M_m$ for which $\Phi=\Psi T_n$.
\end{enumerate} 
In the above, if $\Phi$ is Hermiticity preserving so is $\Psi$, and if $\Phi$ is positive then so is $\Psi$ (and hence is a reversible quantum channel).
\begin{proof}
It is immediate that $3 \Rightarrow 1$ and $3 \Rightarrow 2$. That $1 \Rightarrow 3$ is given by Theorem \ref{theorem:maximum_norm_gap}, and that $2 \Rightarrow 3$ is given by Theorem \ref{theorem:completely_isometric}. For the special cases, since $\Psi = \Phi T_n$, if $\Phi$ is Hermiticity preserving so is $\Psi$, as it is a composition of Hermiticity preserving maps. The same logic applies if $\Phi$ is positive; with $\Psi$ being a reversible quantum channel following from the positive case of Theorem \ref{theorem:completely_isometric}.
\end{proof}
\end{corollary}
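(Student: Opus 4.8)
The plan is to establish the three-way equivalence as a cycle, using the two main theorems for the substantive directions and dispatching the remaining directions by direct computation. Concretely, I would prove $3 \Rightarrow 1$ and $3 \Rightarrow 2$ directly, derive $1 \Rightarrow 3$ from Theorem \ref{theorem:maximum_norm_gap}, and derive $2 \Rightarrow 3$ from Theorem \ref{theorem:completely_isometric}. Throughout I exploit that $T_n$ is an involution ($T_n T_n = \id{n}$), is a trace-norm isometry on single matrices, and is both Hermiticity preserving and positive.

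For the two directions out of statement 3, write $\Phi = \Psi T_n$ with $\Psi$ a complete trace-norm isometry. For $3 \Rightarrow 1$, since $\Psi$ preserves the trace-norm I get $\norm{\Phi(X)}_1 = \norm{\Psi(X^\t)}_1 = \norm{X^\t}_1 = \norm{X}_1$, so $\norm{\Phi}_1 = 1$; for the completely bounded norm I would factor $\Phi \otimes \id{n} = (\Psi \otimes \id{n})(T_n \otimes \id{n})$ and use that $\Psi \otimes \id{n}$ is a trace-norm isometry to reduce $\vertiii{\Phi}_1 = \norm{\Phi \otimes \id{n}}_1$ to $\norm{T_n \otimes \id{n}}_1 = n$, the canonical value recalled in the introduction. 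For $3 \Rightarrow 2$, I use $\Phi T_n = \Psi$ and $\Phi = \Psi T_n$ and apply statement \ref{statement:choi} of Theorem \ref{theorem:completely_isometric} to $\Psi$, obtaining $\norm{J(\Phi T_n)}_1 = \norm{J(\Psi)}_1 = n$ and $\norm{J(\Phi)}_1 = \norm{J(\Psi T_n)}_1 = n^2$.

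For $1 \Rightarrow 3$, I would apply Theorem \ref{theorem:maximum_norm_gap} at $k = n$, using $\vertiii{\Phi}_1 = \norm{\Phi \otimes \id{n}}_1 = n$. The key observation is that when $k = n$ the isometries $U,V : \complex^k \rightarrow \complex^n$ furnished by the theorem are unitaries, so that $U M_n V^* = M_n$ and hence $T_n(U M_n V^*) = M_n$; the complete trace-norm isometry $\Psi$ produced by the theorem then satisfies $\Phi(X) = \Psi(X^\t)$ on all of $M_n$, i.e. $\Phi = \Psi T_n$, which is exactly statement 3. For $2 \Rightarrow 3$, I would set $\Psi = \Phi T_n$ (so $\Phi = \Psi T_n$) and translate the Choi conditions via the involution identity: $\norm{J(\Psi)}_1 = \norm{J(\Phi T_n)}_1 = n$ and $\norm{J(\Psi T_n)}_1 = \norm{J(\Phi)}_1 = n^2$, which are precisely the hypotheses of statement \ref{statement:choi}, so $\Psi$ is a complete trace-norm isometry.

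Finally, for the special cases, since $\Psi = \Phi T_n$ and $T_n$ is Hermiticity preserving and positive, $\Psi$ inherits Hermiticity preservation or positivity from $\Phi$ as a composition; when $\Phi$ is positive, the positive case of Theorem \ref{theorem:completely_isometric} upgrades $\Psi$ to a reversible quantum channel. I expect no serious obstacle here: the content is carried entirely by the two main theorems. The only points demanding care are the bookkeeping with the transpose — the involution identity $T_n T_n = \id{n}$ and the interchangeability of transpose placements inside the Choi norms recorded in Equation \eqref{equation:equiv_choi} — and the verification that the $k = n$ specialization of Theorem \ref{theorem:maximum_norm_gap} genuinely yields a map defined on all of $M_n$ rather than on a proper subspace.
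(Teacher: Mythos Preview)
Your proposal is correct and follows essentially the same approach as the paper: the paper likewise proves $3 \Rightarrow 1$ and $3 \Rightarrow 2$ directly, invokes Theorem \ref{theorem:maximum_norm_gap} for $1 \Rightarrow 3$ and Theorem \ref{theorem:completely_isometric} for $2 \Rightarrow 3$, and handles the special cases by the same composition-with-$T_n$ argument. Your write-up is simply more explicit about the bookkeeping (the $k=n$ specialization forcing $U,V$ to be unitaries, and the involution identity $T_nT_n=\id{n}$ translating the Choi-norm hypotheses), which the paper leaves implicit.
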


\section{A uniqueness result for the Werner-Holevo channels in single-shot quantum channel discrimination} \label{section:werner_holevo}

A fundamental operational task in quantum information is to determine which quantum channel, from a set of possible channels, is acting on a system. The simplest version of this task is \emph{single-shot quantum channel discrimination}, where the goal is to determine which of two channels is acting on a system given only a single use. Various aspects of this task have been extensively studied (see e.g. \cite{Kitaev97,KitaevSV02,Sacchi05,Sacchi05b,Rosgen08,Watrous08,piani_all_2009,jencova_conditions_2016,puzzuoli_ancilla_2017}). For completeness, we give a description of the task below. Proofs of all facts summarized may be found in \cite[Chapter 3]{watrous_quantum_2017}.

Single-shot quantum channel discrimination may be formulated as a single-player game specified by a triple $(\lambda, \Gamma_0, \Gamma_1)$, where $\lambda \in [0,1]$ and $\Gamma_0,\Gamma_1 : M_n \rightarrow M_m$ are quantum channels. In the game, the player knows a description of the triple $(\lambda, \Gamma_0, \Gamma_1)$, and the game proceeds as follows:
\begin{enumerate}
	\item The referee samples a bit $\alpha \in \{0,1\}$ with probability $p(0) =\lambda$, $p(1)= 1-\lambda$.
	\item The player is given a single use of the channel $\Gamma_\alpha$; i.e. the player gives a quantum state on the input system of their choice to the referee, who then returns the output of $\Gamma_\alpha$.
	\item The player must guess $\alpha$ (after say, making a measurement on the output). 
\end{enumerate}
The goal of the player is to maximize the probability that they guess $\alpha$ correctly.

Ultimately, all the player can do is prepare an input state to $\Gamma_\alpha$ then try to discriminate the two possible outputs. In the most unconstrained version of the game, the player is free to use an auxilliary system; i.e. they can prepare a bipartite quantum state $\rho \in M_n \otimes M_k$ then discriminate the outputs $(\Gamma_\alpha \otimes \id{k})(\rho)$. Due to the Holevo-Helstrom theorem for single-shot quantum state discrimination \cite{helstrom_detection_1967,holevo_analog_1972}, the optimal probability of success given a choice of state $\rho \in M_n \otimes M_k$ is
\begin{equation}
	\frac{1}{2} + \frac{1}{2}\bignorm{\lambda \big(\Gamma_0 \otimes \id{k}\big)(\rho) - (1-\lambda)\big(\Gamma_1 \otimes \id{k}\big)(\rho)}_1.
\end{equation}
Thus, for a fixed auxiliary system of dimension $k$, the optimal success probability of winning the game is given by the optimization of the above expression over density matrices, which reduces to
\begin{equation}
	\frac{1}{2} + \frac{1}{2}\norm{\lambda \Gamma_0 \otimes \id{k} - (1-\lambda)\Gamma_1 \otimes \id{k}}_{1,H}, \label{equation:optimal_k}
\end{equation}
where, for $\Psi : M_n \rightarrow M_m$,
\begin{equation}
	\norm{\Psi}_{1,H} = \max \{ \norm{\Psi(H)}_1 : H \in M_n, \norm{H}_1 = 1, H=H^*\}.
\end{equation}
Hence, the optimal value over unconstrained strategies, which amounts to optimizing Equation \eqref{equation:optimal_k} over $k \geq 1$, is given by 
\begin{equation}
	\frac{1}{2} + \frac{1}{2}\vertiii{\lambda \Gamma_0 - (1-\lambda)\Gamma_1 }_1. \label{equation:optimal}
\end{equation}

A natural question to ask in this setting is: What channel discrimination games (with system input size $n$) have the maximum possible gap between the optimal performance with and without entanglement? By Equations \eqref{equation:optimal_k} and \eqref{equation:optimal}, this amounts to characterizing the games $(\lambda, \Gamma_0, \Gamma_1)$, where $\Gamma_0, \Gamma_1 : M_n \rightarrow M_m$, with maximal gap between the norms
\begin{equation}
	\norm{\lambda \Gamma_0 - (1-\lambda) \Gamma_1}_{1,H}\:\:\textnormal{ and }\:\:\vertiii{\lambda \Gamma_0 - (1-\lambda) \Gamma_1}_1.
\end{equation}
In this section we make partial progress towards answering this question. We apply the results of the previous section to characterize the games (with input space $M_n$) that have maximal gap between the norms $\norm{\lambda \Gamma_0 - (1-\lambda) \Gamma_1}_1$ and $\vertiii{\lambda \Gamma_0 - (1-\lambda) \Gamma_1}_1$. As it generically holds that
\begin{equation}
	0 \leq \norm{\lambda \Gamma_0 - (1-\lambda) \Gamma_1}_1 \leq  \vertiii{\lambda \Gamma_0 - (1-\lambda) \Gamma_1}_1 \leq 1,
\end{equation}
and $\vertiii{\lambda \Gamma_0 - (1-\lambda) \Gamma_1}_1 \leq n\norm{\lambda \Gamma_0 - (1-\lambda) \Gamma_1}_1$, the maximum possible gap occurs when
\begin{equation}
	1 = \vertiii{\lambda \Gamma_0 - (1-\lambda) \Gamma_1}_1 = n\norm{\lambda \Gamma_0 - (1-\lambda) \Gamma_1}_1.\label{equation:game_max_norm_gap}
\end{equation}
Operationally, the above relations say that the game can be won with certainty using arbitrary entanglement, but is hard to win without entanglement, with the upper bound on unentangled performance given by $\norm{\lambda \Gamma_0 - (1-\lambda) \Gamma_1}_1$ being as small as possible given that the game can be won with certainty.

As detailed in \cite[Example 3.36]{watrous_quantum_2017}, the Werner-Holevo channels \cite{werner_counterexample_2002} provide a well-known family of channel discrimination games satisfying Equation \eqref{equation:game_max_norm_gap}. For $n \geq 2$, we denote them as $\Phi_n^{(0)},\Phi_n^{(1)} : M_n \rightarrow M_n$, and they are defined to act as
\begin{equation}
	\Phi_n^{(0)}(X) = \frac{1}{n+1}(\Tr(X)\I_n + X^\t)\textnormal{, and }\Phi_n^{(1)}(X) = \frac{1}{n-1}(\Tr(X)\I_n - X^\t) \label{equation:werner_holevo_definition}
\end{equation}
for all $X \in M_n$. For the probability $\lambda_n = \frac{n+1}{2n}$, it holds that
\begin{equation}
	\lambda_n \Phi_n^{(0)} - (1-\lambda_n) \Phi_n^{(1)} = \frac{1}{n}T_n, \label{equation:transpose_decomp}
\end{equation}
and as such
\begin{equation}
	1 = \Bigvertiii{\lambda_n \Phi_n^{(0)} - (1-\lambda_n) \Phi_n^{(1)}}_1 = n \Bignorm{\lambda_n \Phi_n^{(0)} - (1-\lambda_n) \Phi_n^{(1)}}_1. \label{equation:channel_descrim_norms}
\end{equation}
Thus, for $n \geq 2$ the triple $\big(\lambda_n, \Phi_n^{(0)}, \Phi_n^{(1)}\big)$ is an example of a channel discrimination game satisfying Equation \eqref{equation:game_max_norm_gap}.\footnote{While the Werner-Holevo channels were originally introduced in \cite{werner_counterexample_2002} for other reasons, in the setting of quantum channel discrimination, they may be intuitively viewed as a way of smuggling the transpose into the problem, with the properties of the resulting game being inherited from norm properties of the transpose.} We now apply the result of the previous section to show that the game specified by the triple $\big(\lambda_n,\Phi_n^{(0)},\Phi_n^{(1)}\big)$ is in some sense the unique game (with channels having domain $M_n$) satisfying Equation \eqref{equation:game_max_norm_gap}.

\begin{theorem} \label{theorem:werner_holevo_unique}
Let $m \geq n$, $\Gamma_0,\Gamma_1 : M_n \rightarrow M_m$ be quantum channels, $\lambda \in (0,1)$ be a probability, and let $\Phi_n^{(0)},\Phi_n^{(1)} : M_n \rightarrow M_n$ be the Werner-Holevo channels as given in Equation \eqref{equation:werner_holevo_definition}. It holds that
\begin{equation}
	1 = \vertiii{\lambda \Gamma_0 - (1-\lambda) \Gamma_1}_1 = n\norm{\lambda \Gamma_0 - (1-\lambda) \Gamma_1}_1 \label{equation:maximal_gap_channel_difference}
\end{equation}
if and only if:
\begin{itemize}
\item For $m < 2n$, there exists a reversible quantum channel $\Psi : M_n \rightarrow M_m$ for which either $(\lambda, \Gamma_0,\Gamma_1) = (\lambda_n,\Psi \Phi_n^{(0)}, \Psi \Phi_n^{(1)})$, or $(\lambda, \Gamma_0,\Gamma_1) = (1-\lambda_n,\Psi \Phi_n^{(1)}, \Psi \Phi_n^{(0)})$.
\item For $m \geq 2n$, there exists $r \in [0,1]$ and two reversible channels $\Psi_0,\Psi_1 : M_n \rightarrow M_m$ with orthogonal ranges for which $\lambda = r \lambda_n + (1-r)(1-\lambda_n)$, and
\begin{equation}
	\lambda \Gamma_0 = r \lambda_n \Psi_0 \Phi_n^{(0)} + (1-r)(1 - \lambda_n) \Psi_1 \Phi_n^{(1)}, \label{equation:gamma_0}
\end{equation}
and
\begin{equation}
	(1-\lambda) \Gamma_1 = r (1-\lambda_n) \Psi_0 \Phi_n^{(1)} + (1-r)\lambda_n \Psi_1 \Phi_n^{(0)}. \label{equation:gamma_1}
\end{equation}
\end{itemize}
\end{theorem}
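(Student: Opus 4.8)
The plan is to reduce everything to Corollary \ref{corollary:maximal_norm_gap} applied to the Hermiticity-preserving map $\Phi = \lambda \Gamma_0 - (1-\lambda)\Gamma_1$. Observe first that $\Phi$ is Hermiticity preserving (a real combination of quantum channels), and the hypothesis \eqref{equation:maximal_gap_channel_difference} is exactly condition 1 of Corollary \ref{corollary:maximal_norm_gap}. Hence there is a complete trace-norm isometry $\Psi : M_n \to M_m$ with $\Phi = \Psi T_n$, and by the ``Hermiticity preserving'' clause of that corollary, $\Psi$ is Hermiticity preserving. Now invoke Remark \ref{remark:herm_trace_norm_pres}: a Hermiticity-preserving complete trace-norm isometry $\Psi$ on $M_n$ into $M_m$ either (when $m < 2n$) satisfies $\Psi$ or $-\Psi$ is a reversible quantum channel, or (when $m \geq 2n$) decomposes as $\Psi = r\Psi_0 - (1-r)\Psi_1$ for reversible channels $\Psi_0,\Psi_1$ with orthogonal ranges and $r \in [0,1]$. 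Conversely, in the ``if'' direction one checks directly that the asserted forms for $(\lambda,\Gamma_0,\Gamma_1)$ yield $\lambda\Gamma_0 - (1-\lambda)\Gamma_1 = \Psi T_n$ (resp. $= (r\Psi_0 - (1-r)\Psi_1)T_n$) using \eqref{equation:transpose_decomp}, and then Corollary \ref{corollary:maximal_norm_gap} gives \eqref{equation:maximal_gap_channel_difference}; this direction is routine.

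For the ``only if'' direction with $m < 2n$: $\Psi$ or $-\Psi$ is a reversible quantum channel; WLOG say $\Psi$ is (the other case is symmetric and produces the swapped/complementary triple). Then $\Psi T_n = \lambda\Gamma_0 - (1-\lambda)\Gamma_1$. Using \eqref{equation:transpose_decomp}, $T_n = n\big(\lambda_n \Phi_n^{(0)} - (1-\lambda_n)\Phi_n^{(1)}\big)$, so $\Psi T_n = n\lambda_n\, \Psi\Phi_n^{(0)} - n(1-\lambda_n)\,\Psi\Phi_n^{(1)}$. The maps $\Psi\Phi_n^{(0)}$ and $\Psi\Phi_n^{(1)}$ are quantum channels, and the coefficients $n\lambda_n = \tfrac{n+1}{2}$, $n(1-\lambda_n) = \tfrac{n-1}{2}$ do not obviously match $\lambda,1-\lambda$; the point is to rescale. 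Dividing the channel difference by its ``total weight'': since $\lambda\Gamma_0 - (1-\lambda)\Gamma_1 = \Psi T_n$ and both sides, when one separates positive and negative parts, must have matching decompositions, one argues that the affine decomposition of a Hermiticity-preserving map into a difference of (subnormalized) channels with a fixed total weight is unique. Concretely, applying the trace functional and positivity, $\lambda\Gamma_0$ and $\tfrac{n+1}{2n}\Psi\Phi_n^{(0)}$ must agree, and likewise for the negative parts — because $\Psi\Phi_n^{(0)}, \Psi\Phi_n^{(1)}$ already have the right normalizations (they're channels) and $\lambda_n = \tfrac{n+1}{2n}$. This forces $(\lambda,\Gamma_0,\Gamma_1) = (\lambda_n, \Psi\Phi_n^{(0)}, \Psi\Phi_n^{(1)})$; the case where $-\Psi$ is the channel flips the roles of $\Phi_n^{(0)}$ and $\Phi_n^{(1)}$ and sends $\lambda_n \mapsto 1-\lambda_n$.

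For $m \geq 2n$, write $\Psi = r\Psi_0 - (1-r)\Psi_1$ with $\Psi_0,\Psi_1$ reversible channels of orthogonal ranges. Then
\begin{align}
\lambda\Gamma_0 - (1-\lambda)\Gamma_1 &= \Psi T_n = \big(r\Psi_0 - (1-r)\Psi_1\big)\,n\big(\lambda_n \Phi_n^{(0)} - (1-\lambda_n)\Phi_n^{(1)}\big) \\
&= r n\lambda_n\, \Psi_0\Phi_n^{(0)} + (1-r)n(1-\lambda_n)\,\Psi_1\Phi_n^{(1)} \nonumber\\
&\quad - \big(r n(1-\lambda_n)\,\Psi_0\Phi_n^{(1)} + (1-r)n\lambda_n\,\Psi_1\Phi_n^{(0)}\big). \nonumber
\end{align}
The four maps $\Psi_0\Phi_n^{(0)}, \Psi_1\Phi_n^{(1)}, \Psi_0\Phi_n^{(1)}, \Psi_1\Phi_n^{(0)}$ are quantum channels; the first two have outputs supported on $\ran(\Psi_0)$-type and $\ran(\Psi_1)$-type orthogonal subspaces respectively, as do the last two. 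Grouping the ``positive'' channels and ``negative'' channels and matching against $\lambda\Gamma_0$ and $(1-\lambda)\Gamma_1$ — again using that the decomposition of a Hermiticity-preserving map into orthogonally-supported positive and negative completely positive pieces is forced — yields \eqref{equation:gamma_0} and \eqref{equation:gamma_1}, and summing the positive coefficients gives $\lambda = rn\lambda_n + (1-r)n(1-\lambda_n) \cdot (\text{normalization})$; checking that this simplifies to $\lambda = r\lambda_n + (1-r)(1-\lambda_n)$ is a short computation using $\Tr\Gamma_0 = \Tr$ and the normalizations of the constituent channels.

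The main obstacle is the uniqueness-of-decomposition step in both cases: one must argue that given a Hermiticity-preserving map expressed as $\lambda\Gamma_0 - (1-\lambda)\Gamma_1$ with $\Gamma_i$ channels, and also as a difference of two completely positive maps that happen to be (scaled) channels with known normalizations and orthogonal supports, the two expressions must coincide termwise. This follows from the uniqueness of the Jordan/Hahn-type decomposition of a Hermiticity-preserving map into its positive and negative parts once one knows the supports are orthogonal, but care is needed in the $m < 2n$ case where $\Psi$ need not itself split and one instead works directly with the sign of $\Psi$; and in tracking how the normalization constant $\lambda_n = \tfrac{n+1}{2n}$ propagates through, so that the bookkeeping of the $\tfrac{n+1}{2}$ and $\tfrac{n-1}{2}$ factors correctly reproduces $\lambda$ and $1-\lambda$.
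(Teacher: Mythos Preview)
Your overall strategy matches the paper's: apply Corollary~\ref{corollary:maximal_norm_gap} and Remark~\ref{remark:herm_trace_norm_pres} to write $\lambda\Gamma_0-(1-\lambda)\Gamma_1$ in terms of a Hermiticity-preserving complete trace-norm isometry composed with $T_n$, expand $T_n$ via the Werner--Holevo identity~\eqref{equation:transpose_decomp}, and then match terms. There are, however, two real issues.

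\textbf{Normalization.} The hypothesis~\eqref{equation:maximal_gap_channel_difference} is \emph{not} literally condition~1 of Corollary~\ref{corollary:maximal_norm_gap}: here $\vertiii{\Phi}_1=1$ and $\norm{\Phi}_1=1/n$, so it is $n\Phi$ that satisfies the corollary's hypotheses, and one obtains $\lambda\Gamma_0-(1-\lambda)\Gamma_1=\tfrac{1}{n}\Psi T_n$, not $\Psi T_n$. Once this factor is in place, the identity $\tfrac{1}{n}T_n=\lambda_n\Phi_n^{(0)}-(1-\lambda_n)\Phi_n^{(1)}$ gives coefficients $\lambda_n$ and $1-\lambda_n$ directly, and your later confusion about the stray factors $n\lambda_n=\tfrac{n+1}{2}$ etc.\ disappears.

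\textbf{The uniqueness step.} You correctly flag this as the main obstacle, but your proposed resolution does not work as written. You appeal to ``the uniqueness of the Jordan/Hahn-type decomposition of a Hermiticity-preserving map into its positive and negative parts once one knows the supports are orthogonal'' --- but nothing in the hypotheses tells you that $J(\lambda\Gamma_0)$ and $J((1-\lambda)\Gamma_1)$ have orthogonal supports, and in general a Hermiticity-preserving map has many decompositions as a difference of completely positive maps. What forces uniqueness here is the \emph{norm} condition, not orthogonality: both decompositions $\Phi=\lambda\Gamma_0-(1-\lambda)\Gamma_1$ and $\Phi=[\,r\lambda_n\Psi_0\Phi_n^{(0)}+(1-r)(1-\lambda_n)\Psi_1\Phi_n^{(1)}\,]-[\,\cdots\,]$ satisfy $\vertiii{\Phi}_1=\vertiii{\text{positive part}}_1+\vertiii{\text{negative part}}_1$. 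The paper isolates this as Lemma~\ref{lemma:CP_difference}: evaluating at the maximally entangled state $\tau_n$ (where $\vertiii{\Phi}_1$ is achieved) and using the triangle inequality shows that each such decomposition, when pushed through $(\cdot\otimes\id{n})(\tau_n)$, must be the Hahn decomposition of the Hermitian \emph{matrix} $(\Phi\otimes\id{n})(\tau_n)$; uniqueness of the matrix Hahn decomposition plus full Schmidt rank of $\tau_n$ then forces the two map-level decompositions to agree. Your sketch gestures at this but does not supply the argument; in particular, ``applying the trace functional and positivity'' cannot by itself separate the positive and negative parts.
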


\begin{remark}
Theorem \ref{theorem:werner_holevo_unique} may be interpreted as saying that the game $(\lambda_n, \Phi_n^{(0)}, \Phi_n^{(1)})$ uniquely satisfies Equation \eqref{equation:maximal_gap_channel_difference} in the following sense: Any game $(\lambda, \Gamma_0, \Gamma_1)$, whose channels have domain $M_n$ and satisfy Equation \eqref{equation:maximal_gap_channel_difference}, is constructed out of, and is reducible by the player to, the game $(\lambda_n, \Phi_n^{(0)}, \Phi_n^{(1)})$ in a way that perfectly preserves success probabilities. Indeed, mathematically, one can check that
\begin{align}
	\|\lambda (\Gamma_0 \otimes \id{k})(X) - &(1-\lambda)(\Gamma_1 \otimes \id{k})(X)\|_1 \\&= \bignorm{\lambda_n \big(\Phi_n^{(0)} \otimes \id{k}\big)(X) - (1-\lambda_n)\big(\Phi_n^{(1)} \otimes \id{k}\big)(X)}_1 \label{equation:equal_probability}
\end{align}
for all integers $k\geq 1$ and matrices $X \in M_n \otimes M_k$. Operationally, the construction/reduction of such games $(\lambda, \Gamma_0, \Gamma_1)$ in terms of $(\lambda_n, \Phi_n^{(0)}, \Phi_n^{(1)})$ goes as follows.
\begin{itemize}
	\item For the case $m < 2n$, the construction and reduction are natural; up to a reversible quantum channel (which the player can undo) and a relabeling of the channels (which the player knows), the game $(\lambda, \Gamma_0, \Gamma_1)$ is exactly the game $\big(\lambda_n, \Phi_n^{(0)},\Phi_n^{(1)}\big)$.
	\item For the case $m \geq 2n$, the relation between $(\lambda, \Gamma_0, \Gamma_1)$ and $\big(\lambda_n, \Phi_n^{(0)}, \Phi_n^{(1)}\big)$ is less clear, though it can be thought of as a convex combination of relabelings of the game $\big(\lambda_n, \Phi_n^{(0)}, \Phi_n^{(1)}\big)$, where the player is able to detect which labeling is being used. Specifically, with probability $r$, $\Gamma_0$ acts as $\Phi_n^{(0)}$ and $\Gamma_1$ acts as $\Phi_n^{(1)}$, and with probability $(1-r)$ the labels are reversed. As $\Psi_0$ and $\Psi_1$ have orthogonal ranges, the player is able to measure which labelling is being used without disturbance. Once this is done, the situation from the players perspective is now the same as in the case $m < 2n$, and they may act accordingly.
\end{itemize}
\end{remark}

Before proving Theorem \ref{theorem:werner_holevo_unique}, we prove a lemma regarding the uniqueness of certain decompositions of Hermiticity preserving maps into differences of completely positive maps.

\begin{lemma} \label{lemma:CP_difference}
Let $\Phi : M_n \rightarrow M_m$ be Hermiticity preserving, $\Psi_0,\Psi_1 :M_n \rightarrow M_m$ be completely positive and satisfy
\begin{equation}
	\Phi = \Psi_0 - \Psi_1\textnormal{ and } \vertiii{\Phi}_1 = \vertiii{\Psi_0}_1 + \vertiii{\Psi_1}_1, \label{equation:decomposition_properties}
\end{equation}
and let $u \in \complex^n \otimes \complex^n$ be a unit vector satisfying $\vertiii{\Phi}_1 = \norm{(\Phi \otimes \id{n})(uu^*)}_1$. It follows that
\begin{equation}
	\vertiii{\Psi_0}_1 = \norm{(\Psi_0 \otimes \id{n})(uu^*)}_1\textnormal{ and }\vertiii{\Psi_1}_1 = \bignorm{(\Psi_1 \otimes \id{n})(uu^*)}_1,
\end{equation}
and for any other completely positive maps $\Psi_0', \Psi_1' :M_n \rightarrow M_m$ satisfying the conditions in Equation \eqref{equation:decomposition_properties},
\begin{equation}
		(\Psi_0' \otimes \id{n})(uu^*) = (\Psi_0 \otimes \id{n})(uu^*)\textnormal{ and }(\Psi_1' \otimes \id{n})(uu^*) = (\Psi_1 \otimes \id{n})(uu^*). \label{equation:decomp_uniqueness}
\end{equation}
Hence, if such a $u$ exists with full Schmidt-rank, the completely positive maps $\Psi_0,\Psi_1$ satisfying Equation \eqref{equation:decomposition_properties} are unique (if they exist).
\begin{proof}
We have
\begin{align}
	\vertiii{\Phi}_1 &= \norm{(\Phi\otimes \id{n})(uu^*)}_1\\ 
		&= \norm{(\Psi_0\otimes \id{n})(uu^*) - (\Psi_1\otimes \id{n})(uu^*)}_1 \\
		&\leq \norm{(\Psi_0\otimes \id{n})(uu^*)}_1 + \norm{(\Psi_1\otimes \id{n})(uu^*)}_1 \label{equation:hahn_decomp} \\
		&\leq \vertiii{\Psi_0}_1 + \vertiii{\Psi_1}_1 \\
		&= \vertiii{\Phi}_1.
\end{align}
Hence, all inequalities are equalities, and therefore $\vertiii{\Psi_0}_1 = \norm{(\Psi_0\otimes \id{n})(uu^*)}_1$ and $\vertiii{\Psi_1}_1 = \norm{(\Psi_1\otimes \id{n})(uu^*)}_1$. 

Next, as $\Psi_0$ and $\Psi_1$ are completely positive, it holds that $(\Psi_0\otimes \id{n})(uu^*) \geq 0$ and $(\Psi_1\otimes \id{n})(uu^*) \geq 0$, and so equality in Equation \eqref{equation:hahn_decomp} implies that
\begin{equation}
	(\Phi\otimes \id{n})(uu^*) = (\Psi_0\otimes \id{n})(uu^*) - (\Psi_1\otimes \id{n})(uu^*)
\end{equation}
is the Hahn decomposition of $(\Phi\otimes \id{n})(uu^*)$.\footnote{The Hahn decomposition of a Hermitian matrix $H \in M_n$ is the unique decomposition of $H$ as a difference $H=P-Q$ with $P,Q \geq 0$ and $PQ = 0$. For $H$ Hermitian and $P,Q \geq 0$, it holds that $H=P-Q$ is the Hahn decomposition of $H$ if and only if $\norm{H}_1 = \norm{P}_1 + \norm{Q}_1$.} Thus, for any other completely positive maps $\Psi_0',\Psi_1' : M_n \rightarrow M_m$ satisfying the hypotheses,
\begin{equation}
	(\Phi\otimes \id{n})(uu^*) = (\Psi_0'\otimes \id{n})(uu^*) - (\Psi_1'\otimes \id{n})(uu^*)
\end{equation}
is also the Hahn decomposition of $(\Phi\otimes \id{n})(uu^*)$. Equation \eqref{equation:decomp_uniqueness} therefore follows by the uniqueness of the Hahn decomposition.

Finally, if $u \in \complex^n \otimes \complex^n$ has full Schmidt-rank, then a linear map $\Gamma :M_n \rightarrow M_m$ is uniquely specified by the matrix $(\Gamma \otimes \id{n})(uu^*)$, and so Equation \eqref{equation:decomp_uniqueness} implies the uniqueness of the pair $\Psi_0$ and $\Psi_1$ (assuming such a pair exists).
\end{proof}
\end{lemma}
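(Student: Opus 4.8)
The plan is to run a ``squeeze'' argument built on the triangle inequality. Starting from the hypothesis that the fixed unit vector $u$ achieves $\vertiii{\Phi}_1 = \norm{(\Phi \otimes \id{n})(uu^*)}_1$, I would substitute the decomposition $\Phi = \Psi_0 - \Psi_1$ and chain:
\begin{align*}
\vertiii{\Phi}_1 &= \norm{(\Psi_0\otimes\id{n})(uu^*) - (\Psi_1\otimes\id{n})(uu^*)}_1 \\
&\leq \norm{(\Psi_0\otimes\id{n})(uu^*)}_1 + \norm{(\Psi_1\otimes\id{n})(uu^*)}_1 \\
&\leq \vertiii{\Psi_0}_1 + \vertiii{\Psi_1}_1,
\end{align*}
where the first bound is the triangle inequality and the second uses $\norm{(\Psi_i\otimes\id{n})(uu^*)}_1 \leq \norm{\Psi_i \otimes \id{n}}_1 = \vertiii{\Psi_i}_1$ (valid since $\norm{uu^*}_1 = 1$, and using that the completely bounded trace-norm is attained at $k=n$, as recorded in Section \ref{section:background}). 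By the standing hypothesis in Equation \eqref{equation:decomposition_properties}, the final expression equals $\vertiii{\Phi}_1$, so every inequality must be an equality. Reading off the second one gives the first claim, namely $\vertiii{\Psi_i}_1 = \norm{(\Psi_i\otimes\id{n})(uu^*)}_1$ for $i=0,1$.

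The crucial payoff is equality in the \emph{first} inequality. Because $\Psi_0,\Psi_1$ are completely positive, the matrices $P := (\Psi_0\otimes\id{n})(uu^*)$ and $Q := (\Psi_1\otimes\id{n})(uu^*)$ are both positive semidefinite. I would then invoke the standard characterization that for a Hermitian matrix $H$ and positive $P,Q$, the decomposition $H = P-Q$ is the Hahn decomposition of $H$ if and only if $\norm{H}_1 = \norm{P}_1 + \norm{Q}_1$. Applying this with $H = (\Phi\otimes\id{n})(uu^*)$ identifies $P$ and $Q$ as the (unique) positive and negative parts of this fixed matrix.

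Uniqueness then follows immediately. Any other admissible pair $\Psi_0',\Psi_1'$ yields positive matrices $P',Q'$ with $P'-Q' = (\Phi\otimes\id{n})(uu^*)$, and rerunning the same squeeze forces $\norm{P'}_1 + \norm{Q'}_1 = \norm{(\Phi\otimes\id{n})(uu^*)}_1$, so $P'-Q'$ is again the Hahn decomposition of the \emph{same} matrix. Uniqueness of the Hahn decomposition gives $P'=P$ and $Q'=Q$, which is exactly Equation \eqref{equation:decomp_uniqueness}. For the final sentence, if $u$ has full Schmidt rank, writing $u = \sum_i \sqrt{p_i}\, x_i \otimes y_i$ with all $p_i > 0$, the matrix $(\Gamma\otimes\id{n})(uu^*) = \sum_{i,j}\sqrt{p_ip_j}\,\Gamma(x_ix_j^*)\otimes y_iy_j^*$ determines each $\Gamma(x_ix_j^*)$; since $\{x_ix_j^*\}$ spans $M_n$, the assignment $\Gamma \mapsto (\Gamma\otimes\id{n})(uu^*)$ is injective, so $\Psi_0,\Psi_1$ themselves are uniquely determined.

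The only genuinely non-bookkeeping step is the equality-case characterization of the trace-norm triangle inequality for positive matrices via the Hahn decomposition, so I expect that to be the main point requiring care. The key structural observation making it usable is that complete positivity of $\Psi_0,\Psi_1$ is precisely what guarantees $P,Q \geq 0$ at the inflated level $\Phi\otimes\id{n}$ (mere positivity of the maps would not suffice), which is what licenses the Hahn-decomposition argument applied to $(\Phi\otimes\id{n})(uu^*)$ rather than to $\Phi$ directly.
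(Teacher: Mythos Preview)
Your proposal is correct and follows essentially the same approach as the paper: the identical squeeze argument on the chain $\vertiii{\Phi}_1 \leq \norm{P}_1 + \norm{Q}_1 \leq \vertiii{\Psi_0}_1 + \vertiii{\Psi_1}_1 = \vertiii{\Phi}_1$, followed by the Hahn-decomposition equality characterization and the injectivity of $\Gamma \mapsto (\Gamma\otimes\id{n})(uu^*)$ for full Schmidt-rank $u$. Your added remarks (justifying $\norm{(\Psi_i\otimes\id{n})(uu^*)}_1 \leq \vertiii{\Psi_i}_1$ via $\vertiii{\cdot}_1 = \norm{\cdot\otimes\id{n}}_1$, and noting that complete positivity rather than mere positivity is what ensures $P,Q\geq 0$) are correct elaborations of steps the paper leaves implicit.
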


\begin{proof}[Proof of Theorem \ref{theorem:werner_holevo_unique}]
In both cases the ``if'' part is a matter of verifying Equation \eqref{equation:equal_probability}, where the case $m \geq 2n$ requires use of the fact that $\Psi_0$ and $\Psi_1$ have orthogonal ranges. 

Thus, assume we have a channel discrimination triple $(\lambda, \Gamma_0, \Gamma_1)$ satisfying Equation \eqref{equation:maximal_gap_channel_difference}. By Corollary \ref{corollary:maximal_norm_gap}, the norm relation implies
\begin{equation}
	\lambda \Gamma_0 - (1-\lambda) \Gamma_1 = \frac{1}{n} \Psi T_n, \label{equation:CP_decomp}
\end{equation}
for $\Psi : M_n \rightarrow M_m$ a Hermiticity preserving complete trace-norm isometry. Remark \ref{remark:herm_trace_norm_pres} gives the following structure for $\Psi$:
\begin{itemize}
	\item If $m < 2n$, either $\Psi$ or $-\Psi$ is a reversible quantum channel.
	\item If $m \geq 2n$, there exists $r \in [0,1]$ and $\Psi_0, \Psi_1 : M_n \rightarrow M_m$ reversible quantum channels with orthogonal ranges for which $\Psi = r \Psi_0 - (1-r) \Psi_1$.
\end{itemize}
In what follows we will work with the form of $\Psi$ in the case $m \geq 2n$, as the case $m < 2n$ can be subsumed by the case $r=0$ or $r=1$ when $m \geq 2n$, even though it is not possible for two reversible channels $\Psi_0, \Psi_1 : M_n \rightarrow M_m$ to have orthogonal ranges when $m < 2n$.

Observe the following facts: 
\begin{itemize}
	\item $\frac{1}{n}\Psi T_n$ is Hermiticity preserving and decomposes as a difference of CP maps as given in Equation \eqref{equation:CP_decomp},
	\item $\bigvertiii{\frac{1}{n}\Psi T_n}_1 = 1 = \vertiii{\lambda \Gamma_0}_1 + \vertiii{(1-\lambda) \Gamma_1}_1$, and 
	\item $\bigvertiii{\frac{1}{n}\Psi T_n}_1 = \bignorm{\frac{1}{n} (\Psi T_n \otimes \id{n})(\tau_n)}_1$, where $\tau_n = \frac{1}{n} \sum_{a,b=1}^n E_{a,b} \otimes E_{a,b}  \in M_n \otimes M_n$ is the canonical maximally entangled state.
\end{itemize}
When taken together these facts imply, by Lemma \ref{lemma:CP_difference}, that Equation \eqref{equation:CP_decomp} is the \emph{unique} decomposition of $\frac{1}{n} \Psi T_n$ into a difference of CP maps with the above properties. In the remainder of the proof, we will exhibit a (seemingly) different decomposition of $\frac{1}{n}\Psi T_n$, verify that it also satisfies the assumptions of Lemma \ref{lemma:CP_difference}, then conclude that the two decompositions are necessarily the same.

Note that $\frac{1}{n} T_n = \lambda_n \Phi_0^{(n)} - (1-\lambda_n)\Phi_1^{(n)}$, and hence
\begin{align}
	\frac{1}{n}\Psi T_n &= (r\Psi_0 - (1-r)\Psi_1)(\lambda_n \Phi_0^{(n)} - (1-\lambda_n) \Phi_1^{(n)}) \\
									&= \big[r \lambda_n \Psi_0 \Phi_0^{(n)} + (1-r)(1-\lambda_n) \Psi_1 \Phi_1^{(n)}\big] - \big[(1-r)\lambda_n \Psi_1\Phi_0^{(n)} + r(1-\lambda_n)\Psi_0 \Phi_1^{(n)}\big].
\end{align}
The maps in the square brackets are completely positive, and satisfy
\begin{align}
	\Bigvertiii{r \lambda_n \Psi_0 \Phi_0^{(n)} + (1-r)(1-\lambda_n) \Psi_1 \Phi_1^{(n)}}_1 + \Bigvertiii{(1-r)\lambda_n \Psi_1\Phi_0^{(n)} + r(1-\lambda_n)\Psi_0 \Phi_1^{(n)}}_1\\ = r \lambda_n + (1-r)(1-\lambda_n) + (1-r)\lambda_n + r(1-\lambda_n) = 1.
\end{align}
Hence, by the uniqueness clause of Lemma \ref{lemma:CP_difference}, Equations \eqref{equation:gamma_0} and \eqref{equation:gamma_1} hold.

When $m < 2n$, either $r = 0$ or $r=1$, in which case either 
\begin{equation}
	(\lambda, \Gamma_0, \Gamma_1) = \big(\lambda_n, \Psi_0 \Phi_n^{(0)}, \Psi_0\Phi_n^{(1)}\big)\textnormal{, or }(\lambda, \Gamma_0, \Gamma_1) = \big(1- \lambda_n, \Psi_1 \Phi_n^{(1)}, \Psi_1\Phi_n^{(0)}\big),
\end{equation} 
as required.
\end{proof}

\section{Discussion} \label{section:discussion}

The canonical example of a linear map $\Phi : M_n \rightarrow M_m$ satisfying $\norm{\Phi}_1 = 1$ and
\begin{equation}
	\norm{\Phi \otimes \id{k}}_1 = k
\end{equation}
is the matrix transpose, and we have proven that, up to an equivalence, the transpose is the \emph{unique} map satisfying the above equation. We have applied this result in the setting of single-shot quantum channel discrimination to prove that a channel discrimination game $(\lambda, \Gamma_0, \Gamma_1)$ (with input dimension $n$) satisfies the norm relation
\begin{equation}
	1 = \vertiii{\lambda \Gamma_0 - (1-\lambda)\Gamma_1}_1 = n \norm{\lambda \Gamma_0 - (1-\lambda)\Gamma_1}_1
\end{equation}
if and only if it is in some sense equivalent to the game $(\lambda_n, \Phi_n^{(0)}, \Phi_n^{(1)})$, where $\Phi_n^{(0)},\Phi_n^{(1)}$ are the Werner-Holevo channels, and $\lambda_n = \frac{n+1}{2n}$.

The uniqueness result for the Werner-Holevo channel discrimination game is almost, but not quite, a characterization of channel discrimination games with maximal gap between the optimal performance of entangled and unentangled strategies. Characterizing such games requires an understanding of the maximal gap between $\norm{\Phi}_{1,H}$ and $\vertiii{\Phi}_1$ for Hermiticity preserving linear maps $\Phi :M_n \rightarrow M_m$. For example, is it true that $\vertiii{\Phi}_1 \leq n \norm{\Phi}_{1,H}$? More generally,  is it true that
\begin{equation}
	\norm{\Phi \otimes \id{k}}_{1,H} \leq k \norm{\Phi}_{1,H}?
\end{equation} 
It is not so clear if the proof of the inequality in Theorem \ref{theorem:maximum_norm_gap} can be adapted to this situation. It seems reasonable to conjecture that some inequality of the above form holds, and that the transpose will uniquely saturate the inequality.

Another natural question is whether the characterization of linear maps $\Phi : M_n \rightarrow M_m$ for which $\norm{\Phi}_1 = 1$ and $\norm{\Phi \otimes \id{k}}_1 = k$ holds approximately. E.g. if $\norm{\Phi}_1 = 1$ and $\vertiii{\Phi}_1 \geq n - \epsilon$, then is $\Phi$ necessarily close in some sense to the transpose (followed by a complete trace-norm isometry)?

\subsection*{Acknowledgements}
We thank John Watrous, Vern Paulsen, and Chi-Kwong Li for helpful discussions and comments. This work was supported by NSERC, the Ontario Graduate Scholarship, and the Queen Elizabeth II Graduate Scholarship in Science and Technology. 

\appendix
\section{Duality of the operator norm and trace-norm} \label{appendix:duality}
Let $\norm{\cdot}$ denote the operator norm for matrices, as well as the induced operator norm for linear maps of matrices, i.e. for linear $\Phi : M_n \rightarrow M_m$, 
\begin{equation}
	\norm{\Phi} = \max\{\norm{\Phi(X)} : X \in M_n, \norm{X} = 1\}.
\end{equation}
For $A,B \in M_n$, we denote the Hilbert-Schmidt inner product as $\ip{A}{B} = \Tr(A^*B)$, and for a linear map $\Phi : M_n \rightarrow M_m$, we use $\Phi^*$ to denote the adjoint of $\Phi$ with respect to this inner-product. That is, $\Phi^* : M_m \rightarrow M_n$ is the unique linear map satisfying
\begin{equation}
	\ip{A}{\Phi(B)} = \ip{\Phi^*(A)}{B}
\end{equation}
for all $A \in M_m$ and $B \in M_n$.

For our purposes, the ``duality'' of the trace-norm and operator norm may be summarized by the following: For any matrix $A \in M_n$, it holds that
\begin{equation}
	\norm{A}_1 = \max\{ |\ip{X}{A}| : X \in M_n, \norm{X} \leq 1\},
\end{equation}
and
\begin{equation}
	\norm{A} = \max\{ |\ip{X}{A}| : X \in M_n, \norm{X}_1 \leq 1\}.
\end{equation}
A direct implication of these expressions is that, for any linear $\Phi : M_n \rightarrow M_m$, it holds that
\begin{equation}
	\norm{\Phi}_1 = \norm{\Phi^*}\textnormal{ and } \norm{\Phi} = \norm{\Phi^*}_1.
\end{equation}

The above relations enable interconversion of facts about the trace-norm and facts about the operator norm. For example, the statement 
\begin{equation}
	\norm{\Phi \otimes \id{k}}_1 \leq k \norm{\Phi}_1 \textnormal{ for all linear maps }\Phi : M_n \rightarrow M_m \label{equation:statement_trace}
\end{equation}
is equivalent to the statement 
\begin{equation}
	\norm{\Phi \otimes \id{k}} \leq k \norm{\Phi} \textnormal{ for all linear }\Phi : M_n \rightarrow M_m. \label{equation:statement_operator}
\end{equation}
This is why \cite[Exercise 3.10]{paulsen_completely_2003}, which directly generalizes the statement in Equation \eqref{equation:statement_operator} to arbitrary linear maps between unital C$^*$-algebras, is referenced in the introduction as a generalization of the statement in Equation \eqref{equation:statement_trace}. Similarly, our main result characterizing the linear maps $\Phi : M_n \rightarrow M_m$ for which $\norm{\Phi \otimes \id{k}}_1 =k \norm{\Phi}_1$, may also be translated into a characterization of such maps for which $\norm{\Phi \otimes \id{k}} = k \norm{\Phi}$.

\bibliographystyle{plainnat}
\bibliography{quantum_version}

\end{document}